\documentclass[a4paper,12pt,reqno]{amsart}

\usepackage[utf8]{inputenc}
\usepackage{amsmath,amsfonts,amssymb,amsopn,amscd}
\usepackage{comment}
\usepackage{dsfont}
\usepackage{graphicx}
\usepackage{color}
\usepackage[colorlinks]{hyperref}
\usepackage{amsthm}
\usepackage{epigraph}
\usepackage{todonotes}
\usepackage[left=3.5cm,top=2.5cm,bottom=2cm,right=3cm]{geometry}
\usepackage{tikz-cd}

%This shows labels
%\usepackage{showlabels}

\usepackage[normalem]{ulem}

\setlength{\hoffset}{0cm}
\setlength{\textwidth}{16cm}
\setlength{\voffset}{0cm}
\setlength{\textheight}{25cm}
\setlength{\oddsidemargin}{0cm}
\setlength{\evensidemargin}{0cm}

%\date{1st of April 2022}

%allows breaking multiline environments in amsmath commands
\allowdisplaybreaks[4]

\DeclareRobustCommand{\SkipTocEntry}[5]{}

%Operators
%-Algebra

%-Probability

\DeclareMathOperator{\eqlaw}{\stackrel{\Lc}{=}}

\DeclareMathOperator{\Cov}{Cov}
\DeclareMathOperator*{\argmax}{arg\,max}

%-RMT

\DeclareMathOperator{\id}{id}

%Shortcuts

\def\1{{\mathbf 1}}

\def\pa{\partial}

\def\m{\mu}

\def\vare{\varepsilon}

\def\ve{\varepsilon}

%Ordinals

\def\Q{{\mathbb Q}}
\def\R{{\mathbb R}}

%Probability
\def\P{{\mathbb P}}
\def\E{{\mathbb E}}

%Caligraphed letters
\def\Ac{{\mathcal A}}

\def\Fc{{\mathcal F}}

\def\Lc{{\mathcal L}}

\def\Nc{{\mathcal N}}

\def\Sc{{\mathcal S}}

\def\Wc{{\mathcal W}}

%bold letters
\def\ab{{\mathbf a}}
\def\ub{{\mathbf u}}
\def\rb{{\mathbf r}}
\def\vb{{\mathbf v}}
\def\wb{{\mathbf w}}

\def\fP{\mathbf{P}}

%Fraktur

%Environnments
\setlength{\footskip}{2cm}

\newtheorem{thm}{Theorem}[section]
\newtheorem{proposition}[thm]{Proposition}

\newtheorem{definition}[thm]{Definition}

\newtheorem{lemma}[thm]{Lemma}

\newtheorem{rmk}[thm]{Remark}

\numberwithin{equation}{section}
\numberwithin{figure}{section}

\begin{document}
\title[Gaussian Kyle Model with  imperfect information and risk aversion ]
      { Solvability of the Gaussian Kyle model with  imperfect information and risk aversion }
      
\author{Reda Chhaibi}

\author{Ibrahim Ekren}
\thanks{I. Ekren is partially supported by the NSF grant DMS-2406240}

\author{Eunjung Noh}

% --------------------------------------------------------------------
% Abstract
\begin{abstract}
We investigate a Kyle model under Gaussian assumptions where a risk-averse informed trader has imperfect information on the fundamental price of an asset. 
We show that an equilibrium can be constructed by considering an optimal transport problem that is solved under a measure that renders the utility of the informed trader martingale and a filtering problem under the historical measure.
\end{abstract}

\keywords{Kyle model, Risk aversion, Filtering theory, Optimal transport}

\maketitle

%\FMremark{I. Ekren is partially supported by the NSF grant DMS-2406240.}

% --------------------------------------------------------------------
% Table of contents
\hrule
\tableofcontents
\hrule
%\newpage

% --------------------------------------------------------------------

\section{Introduction}
% Kyle model background
The relations between the concepts of liquidity, price impact, information asymmetry, and adverse selection are some of the main focuses of market microstructure research. The seminal paper by Kyle \cite{kyle} provided a simple, tractable framework and has become a corner stone of this research area. In this model, there are three types of trading groups, namely an informed trader, noise traders, and a market maker. A risky asset is traded over a single period, where the informed trader has private information from the beginning on the normally-distributed terminal price. Based on this information, she determines the size of her market order. The noise traders, on the other hand, place orders without any information on the price. The market maker observes the sum of orders, called order flow, without the possibility of distinguishing them. Based on this observation, the market maker decides on the price of the asset, and clears the orders. In this framework, Kyle showed the existence of an equilibrium, in which the informed trader’s strategy is linear with respect to the fundamental price, and the market maker’s pricing rule is linear with respect to the order flow. 

% Relevant literature
Numerous extensions have been studied, and we briefly survey just a few of them. 
% cts-time, non-gaussian extension
A continuous-time version of the Kyle model with non-Gaussian assumption on the terminal price was first proposed by Back \cite{back1992}. 
Many other extensions of the problem with risk-neutral informed trader has been studied in the literature \cite{bp,las2007,cc2007,ccetin2023insider,corcuera2010kyle,bia,cx,cd1,cdf,bal,c,ckl23,cc24}.
%\cite{bal} assumes that private information of the informed trader consists only of her own block size, not including the value of stock at terminal time.

% risk-averse informed trader
The informed trader being a single agent, it is crucial to allow the risk aversion.   
\cite{hs1994,Subrahmanyam98,baruch,cho,bose2020kyle,bose2021multidimensional,RyanDonnelly,xs24} study a version of Kyle's model with a risk-averse informed trader. 
However, all these papers make a simplifying assumption that seems unreasonable. They all assume that the informed trader has perfect information on the final price. In the case of a risk-neutral insider, this assumption is not consequential. Indeed, assume that the informed trader is risk-neutral and has a partial information on the terminal price. Then, the informed trader will treat the conditional expectation of the terminal price given her private information as ``the terminal price" and an equilibrium exists. In this equilibrium, at the terminal time, there is a jump in the price process that corresponds to the deviation between the true terminal price and the conditional expectation of the terminal price given the informed trader's private information. This jump, which is of $0$ mean for all agents, does not affect the optimization problem of all agents. 

If the informed trader is risk-averse, this argument does not work. Indeed, at the final time the informed trader knows that there will be a jump in the price. Thus, the position she builds increases the risk that this jump generates. Thus, in the presence of terminal price risk, a risk-averse informed trader has to make a trade-off between the risk generated by the position that she builds and the returns from this position. 

The mathematical difficulty arises from the fact that one cannot reduce this problem to the control of a one-dimensional diffusion process as in \cite{kyle,back1992,back2020optimal}, where it is shown that different ingredients in the equilibrium are functions of total demand. It is also shown in \cite{cho,bose2020kyle,bose2021multidimensional} that a transformation of the total demand reduces the problem to the control of a novel state variable.

In this paper, we show that an equilibrium exists in the Kyle model with a risk-averse informed trader that has partial information on the terminal price. We make the assumption that the terminal price is Gaussian and show that there is an equilibrium in which the informed trader controls both an endogenously determined process introduced in \cite{cho,bose2020kyle,bose2021multidimensional} and the conditional distribution of the price given the information of the market maker. 
In our work, the rational pricing condition is similar to \cite{bal}. This means that the terminal price quoted by the market maker is equivalent to the utility indifference price of the informed trader. Given this definition of the terminal price, we compute the expected profit of the market maker, which turns out to be negative. Thus, the market maker has to be compensated for fulfilling this role. 

We show that when the informed trader is more informed about the terminal price (meaning the variance of difference between the terminal price and the observed price decreases to $0$), the equilibrium converges to its counterpart in \cite{cho,bose2020kyle} and both Kyle's Lambda and price volatility increase to their values in \cite{cho,bose2020kyle}. 
Thus, a higher uncertainty of the informed trader in the terminal price makes the market less volatile and the market maker less reactive to changes in demand. 
% on private info
% OT with K-B model

Our methodology for finding the equilibrium is based on \cite{first_paper,back2020optimal}.
In \cite{back2020optimal}, it was shown that several important quantities in the Kyle-Back model can be described using the theory of optimal transport. The connection between optimal transport theory and the Kyle-Back model relies on the \textit{inconspicuous trading property} of the equilibrium, which means that the informed trader's strategy remains undetected to the market maker. Such a trading strategy imposes a distributional constraint on the total order flow $Y$ that the market maker receives at the terminal time, i.e, $Y$ is a Brownian motion in its filtration and $Y_T$ has a Gaussian distribution. In \cite{back2020optimal}, in equilibrium the dependence between the terminal price and $Y_T$ can be characterized by an optimal transport problem between this Gaussian distribution of $Y_T$ and the distribution of the price.
% previous paper

In \cite{first_paper}, we considered a financial market with multi-assets having general distributions and fixed trading horizon, and provided a criterion to establish the existence of equilibrium when the criterion of the informed trader depends non-linearly on the private information and the total demand. Due to the risk-aversion and partial information of the informed trader, this nonlinear dependence also appears in this follow-up paper; see the decomposition of utility in \eqref{eq:exputi1}. However, unlike \cite{first_paper} where the relevant state process to construct the equilibrium is the total demand, in this paper we construct this state process via the methodology of \cite{cho,bose2020kyle,bose2021multidimensional} and a fixed point condition involving an optimal transport problem and a change of measure. With the assumption of Gaussianity, we unravel a solvable structure so that if the variance of $Y_T$ is the root of a real function, an equilibrium exists.

The rest of the paper is organized as follows. In Section \ref{s.MainResult}, we state the main result of this paper, which is the explicit expression of equilibrium and its property in our Gaussian Kyle model with imperfect information. To that end, we preempt multiple aspects of the paper by directly introducing the relevant quantities and equations.
In Section \ref{s.PricingRule}, we analyze the equilibrium by studying the value function of the problem and the HJB equation it satisfies. There, we shall see that thanks to a new absolutely continuous measure $\Q$, the dynamic optimization problem can be recast into a terminal problem. This measure $\Q$ is a novel feature of the risk-averse case.
In Section \ref{sect:StructureOptimizer}, we invoke the theory of optimal transport in order to characterize the point-wise optimizer of the terminal problem -- under $\Q$. 
Also, in the current Gaussian setup, we explicitly compute Kantorovich potentials and optimal transport map.
While the solvability of Gaussian optimal transport is no surprise, the price structure is explicitly computed as well.
Most proofs of the results are given in Section \ref{s.appendix}.

% --------------------------------------------------------------------

\subsection{Notations} Throughout this paper we will use the following notations. 
For a vector $\ub$, the transpose is written as $\ub^\top$.
We denote by $e_1$ the vector  $\begin{pmatrix}
1, 0
\end{pmatrix}^T$.
     For two probability measures $\P$ and $\Q$ and for any stochastic process $X$, $\mathcal{L}(X)$ (and $\mathcal{L}^\Q (X)$) means the law of $X$ under $\P$ (and $\Q$, resp.).
    Similarly, we do not write $\P$ for expectation and variance under $\P$ (e.g. $\mathbb{E}[X] = \mathbb{E}^{\P}[X]$).

% --------------------------------------------------------------------
\section{Model}
Consider a finite time horizon $[0,T]$.
Let $(\Omega, \Ac,\P)$ be a probability space endowed with a one dimensional Brownian motion $B$ and one dimensional Gaussian random variables $ \Xi ,\beta, G$ with means $m_\Xi ,m_\beta,0$ and variances $\sigma_\Xi^2,\sigma_\beta^2,\varepsilon>0$. We assume that $B, \Xi ,G,\beta$ are independent and define $\Fc:=(\Fc_t)_{t\in [0,T]}$ as the filtration generated by $(B, \Xi ,\beta)$ and augmented with null sets in the sense of \cite[Definition 7.2 of Chapter 2]{karatzas2012brownian}. The random variable $G$ is $\Ac$-measurable and independent from $\Fc_T\subset \Ac$.

We consider a financial market where the interest rate is zero and a single asset is continuously traded with fixed maturity $T>0$. We suppose that the fundamental value of the asset will be announced to be $V_f= \Xi -G$ at time $T$ and $G$ is interpreted as an observation noise and $ \Xi $ as the noisy observation value of the fundamental value $V_f$ of the asset.  

There are three groups of market participants, namely an informed trader, noise traders, and a market maker. 
The noise traders, whom we group under the name `\textit{noise trader}', trade due to exogenous needs and their position at time $t \in [0, T]$ is assumed to be $Z_t = \sigma B_t$ for some fixed $\sigma >0$.  
The risk-averse {\it informed trader} learns the value of $ \Xi $ at time $t=0$, whereas $G$ is inaccessible to all market participants. The informed trader has $\beta$ as a random initial endowment in the asset. 
The cumulative demand of the informed trader is denoted by $X_t$ for $t\in[0,T]$ with $X_0=0$ and the informed trader's position in the asset is $X_t+\beta$. The risk-neutral {\it market maker} only observes the total demand
$Y_t = X_t+Z_t \ $
and quotes a price for the asset $P_t$. The information of the market maker is the public filtration generated by $Y$ and is denoted by $\Fc^Y_t$.  
At time $t$, the informed trader has access to her initial endowment $\beta$, observed value $ \Xi $, and the path of the price process $(P_s)_{s\in [0,t]}$. In all equilibria considered in the paper, the informed trader can compute the value of $Y_t$ from the path of $P$, and hence can obtain $Z_t$. Thus, the information of the informed trader at time $t$ is $\Fc_t $ with $\Fc^Y_t\subset \Fc_t$.

We denote the law of $ \Xi $ and $Z_T-\beta$ as  $\nu$ and $\mu$, respectively. At the initial time, the market maker only knows these laws, whereas the informed trader knows the value of $( \Xi ,\beta)$.
Denoting the constant absolute risk aversion (CARA) parameter of the informed trader as $\gamma>0$, we also define $g$ by 
$$g(x)=\frac{1}{\gamma}\ln\E[e^{\gamma Gx}]=\frac{\gamma \ve x^2}{2} $$
as such $\vare$ is the variance of the Gaussian centered random variable $G$. 

% --------------------------------------------------------------------

\subsection{Statement of the problem}
The main goal of this paper is to show the existence of an equilibrium between the market participants using the optimal transport theory and a filtering. We first define the optimization criterion and the set of admissible strategies of each participant. 

We denote by $\Sc(\Fc)$, the set of continuous $\Fc$-semimartingales, and define the set of pricing rules of the market maker as follows. 

\begin{definition}
    A pricing rule is a mapping $\mathbf{P}:\Sc(\Fc)\mapsto \Sc(\Fc)$ so that for all $S\in \Sc(\Fc)$, $\mathbf{P}(S)$ is semimartingale with respect to the filtration generated by $S$ and $\mathbf{P}$ is non-anticipative in the sense that for all $S,S'\in \Sc(\Fc)$ with $S_s=S'_s$ for $s\in [0,t]$ a.s., we have $\mathbf{P}_t(S)=\mathbf{P}_t(S')$.
\end{definition}

In the equilibrium we construct, the equilibrium price is obtained by evaluating a pricing rule $\mathbf{P}$ to the total demand $Y$ as
$$P_t:=\mathbf{P}_t(Y) \ . $$
We now define the set of admissible strategies of the informed trader given a pricing rule $\mathbf{P}.$
\begin{definition}
    Given a pricing rule $\mathbf{P},$ an admissible trading strategy for the informed trader is an $\Fc$-adapted continuous semimartingale $X$ so that $X_0=0$ and 
    $$\E\left[ e^{\frac{1}{2} \int_0^T \gamma^2 \sigma^2 \mathbf{P}_t^2(X+Z)dt} \right]<\infty \ .$$
    We denote by $\Ac(\mathbf{P})$ the set of admissible trading strategies given $\mathbf{P}.$%{\color{red}criterion is probably not enough}
\end{definition}

As mentioned, the strategy of the informed trader satisfies $X_0=0$ so that $X_t$ represents the number of shares purchased by the informed trader up to time $t$ and $X_t+\beta$ is her total position in the asset. As in \cite{back1992}, the informed trader's wealth at time $T$ for a given strategy $X$ is
\begin{align*}
\Wc_T(\beta,X,\fP) :&=  \ (X_T+\beta)(V_f-P_T) +\int_0^T   (X_s+\beta) dP_s+\beta P_0 \ , 
\end{align*}
where $(X_T+\beta)(V_f-P_T) $ is the profit of the informed trader from the price mismatch at time $T$ between quoted price $P_T$ and the revealed fundamental value $V_f$, the second term $\int_0^T   (X_s+\beta) dP_s$ is her profit from trading, and $\beta P_0$ is the marked to market value of her initial position. 
Thus, by integration by parts, 
\begin{align*}
\Wc_T(\beta,X,\fP) &= (X_T+\beta)V_f - \int_0^T  \fP_s(X+Z) dX_s- \langle X,\fP(X+Z)\rangle_T \notag\\
&= (X_T+\beta) \Xi -(X_T+\beta) G  - \int_0^T  \fP_s(X+Z) dX_s- \langle X,\fP(X+Z)\rangle_T\ , \label{eq:wealth}
\end{align*}
where we omitted the dependence on $V_f$ for the expression of the profit. 
Thus, the goal of the informed trader is to optimize
\begin{equation}
    \label{eq:problem}
    \sup_{X\in \Ac(\mathbf{P})}\E\left[- \exp\left( -\gamma \Wc_T(\beta,X,\fP) \right)\ | \ \Fc_0 \right] \ , 
\end{equation}
where the supremum is among the class of admissible trading strategies given $\fP$.
If we add $1$ and divide \eqref{eq:problem} by $\gamma$, we obtain that as $\gamma\to 0$ the problem in \eqref{eq:problem} converges to the risk neutral problem 
\begin{equation*}
    \label{eq:problem0}
    \sup_{X\in \Ac(\mathbf{P})}\E\left[ \Wc_T(\beta,X,\fP)\ | \ \Fc_0 \right]
\end{equation*}
studied in \cite{kyle,back1992,first_paper}. All our results also hold for $\gamma=0$, and we obtain the same pricing rule and the strategy as in \cite{first_paper}. 

\begin{definition}
A pair $(X^*,\mathbf{P}^*)$ of  pricing rule $\mathbf{P}^*$ and admissible trading strategy $X^*\in \Ac(\mathbf{P}^*)$ is an equilibrium if the following conditions are satisfied. 

\begin{enumerate}
    \item If the market maker uses the price $P_t=\fP^{*}_t(X+Z)$, then $X^*$ is a maximizer of
\begin{align}\label{eq:optimality}
\sup_{X\in \Ac(\mathbf{P}^*)}\E\left[ - \exp\left( -\gamma \Wc_T(\beta,X,\fP^*) \right)\ | \ \Fc_0 \right] \ .
\end{align}

    \item If the informed trader uses the strategy $X^*$, then the price satisfies the rationality condition
\begin{align}\label{eq:rationality}
    \fP^*_t({X^*+Z})=\E[V_f-\gamma\vare (X_T^*+\beta)|\Fc^{X^*+Z}_t]=\E[ \Xi -\gamma\vare (X_T^*+\beta)|\Fc^{X^*+Z}_t] \ .
\end{align}
\end{enumerate}

\end{definition}

Note that our rationality condition is different from the usual definition $$P_t=\E[V_f|\Fc^{X^*+Z}_t]=\E[ \Xi |\Fc^{X^*+Z}_t]$$ in the literature such as \cite{kyle,back1992}. This classical definition can be justified by the fact that there are many competing market makers and the competition between these agents imposes a $0$ expected profit condition for the market makers. 
The position of a representative market maker is $-Y_t$ and his realized profit, coming from his liquidation at terminal time and trading on $[0,T]$, is 
$$ 
    -Y_T (V_f-P_T)-\int_0^T Y_s dP_s \ . 
$$
As such, the classical martingality and terminal value condition $P_t=\E[V_f|\Fc^Y_t]$ indeed leads to a $0$ conditional expected future profit for the market maker. 

In our context, we use the definition in \cite{bal,first_paper}. Similar to \cite{bal}, our definition of rationality \eqref{eq:rationality} can in fact be deduced from \eqref{eq:optimality}.
Indeed, we assume that $\fP^*$ is the equilibrium pricing rule and $X^*$ is an equilibrium trading strategy. Thanks to the independence of $G$ and an application of integration by parts, for any admissible strategy $X$, the expected utility of the informed trader admits the expression 
\begin{align}
&\E\left[ - \exp\left( \gamma (X_T+\beta) G -\gamma (X_T+\beta)( \Xi -P_T) -\gamma \int_0^T  X_s dP_s  - \gamma \beta P_T\right) | \ \Fc_0 \right]=\notag\\
&\E\left[ - \exp\left(\frac{\gamma^2 \varepsilon (X_T+\beta)^2}{2} -\gamma (X_T+\beta)( \Xi -P_T) -\gamma \int_0^T  X_s dP_s - \gamma \beta P_T \right) | \ \Fc_0 \right]\label{eq:exputi1}
\end{align}
where $P=\fP(X+Z)$. 
In the equilibrium we construct, up to division by $\gamma$, the term in the exponential admits the expression 
$$\gamma \varepsilon \frac{(X_T+\beta)^2}{2}- \Xi  (X_T+\beta)+u(T,\chi_T) - u(0,0) -\int_0^T  P_s dZ_s - \int_0^T \frac{\gamma \sigma^2 P_s^2}{2}ds$$
for some function $u(T,\cdot)$ and a process $(\chi_t)$ driven by $Y$ which will be specified in Section \ref{s.PricingRule}. Note that this expression is $\Fc_T$-measurable. In the equilibrium we construct, we can consider perturbations of the optimum $X^*$ that are equal to $X^*$ on $[0,T-\epsilon]$ and change the value of $X^*_T$ by trading differently on $[T-\epsilon,T]$. Given the equality $\fP_T(X+Z)=\pa_\chi u(T,\chi_T)$ from Section \ref{s.PricingRule}, informally, the first-order optimality condition in these perturbation directions can be written by setting the quantity below to $0$
\begin{align*}
&\frac{\pa}{\pa X_T}\left( e^{\gamma \varepsilon \frac{(X_T+\beta)^2}{2}- \Xi  (X_T+\beta)+u(T,\chi_T) - u(0,0) -\int_0^T  P_s dZ_s - \int_0^T \frac{\gamma \sigma^2 P_s^2}{2}ds}\right)\biggr\rvert_{X=X^*}\\
&=\left(\gamma \varepsilon (X^*_T+\beta)- \Xi +\fP_T(X^*+Z)\right)\\
& \quad \times e^{\gamma \varepsilon \frac{(X_T^*+\beta)^2}{2}- \Xi  (X_T^*+\beta)+u(T,\chi_T) - u(0,0)
-\int_0^T  P_s dZ_s - \int_0^T \frac{\gamma \sigma^2 P_s^2}{2}ds} \ .
\end{align*}
Note that all the above quantities are known by the informed trader at time $T-$. Thus, if the equality $ \Xi -\gamma \varepsilon (X^*_T+\beta)=\fP_T(X^*+Z)$ does not hold, the informed trader will deviate from $X_T^*$. This heuristic shows that without this rationality condition one cannot have an equilibrium. Surprisingly, the equilibrium condition $\gamma \varepsilon (X^*_T+\beta)- \Xi +\fP_T(X^*+Z)=0$ also brings mathematical convenience. Indeed, this definition of equilibrium is also what we need to use the optimal transport methodology of \cite{back2020optimal,first_paper}.

% --------------------------------------------------------------------
% --------------------------------------------------------------------
\section{Main result}\label{s.MainResult}
In this section, we state our main existence of equilibrium result and discuss the properties of the equilibrium. Recall that
$$\mathcal{L}( \Xi ) = \mathcal{N}(m_\Xi , \sigma_\Xi^2),\, \quad
\mathcal{L}(Z_T - \beta) = \mathcal{N}(-m_\beta, \sigma^2 T + \sigma_\beta^2),\, \quad
\mathcal{L}(G)= \mathcal{N}(0,\varepsilon)$$
and denote
$$
\sigma_{e} = \sqrt{\sigma^2_\Xi+\vare^2\gamma^2( \sigma^2 T+\sigma^2_\beta)} \ , \quad 
\ub :=
\begin{bmatrix}
\varepsilon\gamma(\sigma^2T+\sigma_\beta^2) \\ \sigma^2_\Xi
\end{bmatrix} \ , \quad
\wb :=
\begin{bmatrix}
\varepsilon\gamma \\ 1
\end{bmatrix},\quad \vb:=
    \begin{bmatrix}
    1 \\ - \varepsilon\gamma
    \end{bmatrix} \ .
$$
\subsection{Construction of an equilibrium}
In order to construct an equilibrium, we need the following two Lemmata whose proofs are provided in the Section \ref{proof-1}-\ref{proof-2}. 

% This first Lemma (Lemma \ref{lem:v})  comes from a fixed-point condition which stems from the following observation. 
% A new state variable $(\chi_t)$ and a new probability measure $\Q$ introduced in Section \ref{s.PricingRule}
% render the optimization problem of the informed trader an optimization problem from terminal utility that depends on $\chi_T$. 
% If we know the terminal distribution of $\chi_T$ to be $\mathcal{N}(m, v^2)$ under this new measure $\Q$, 
% we can study in Section \ref{sect:StructureOptimizer} an optimal transport problem that pinpoints the pricing rule, the dynamics of $(\chi_t)$, and {the new distribution of $\Q$} as a function of $(m, v^2)$. 
% These dynamics and $\Q$ lead to a new distribution for $\chi_T$ under $\Q$. 
% Thus, we need to check if this map from $\mathcal{N}(m, v^2)$ to the new distribution of $\chi_T$ under $\Q$ has a fixed point. Due to the Gaussianity of the random variables, finding the fixed point is reduced to finding the root of a real-valued function. The following Lemma provides the existence of the root of this function. 

\begin{lemma}\label{lem:v}
    All other parameters being positive, for all $\vare,\gamma\geq 0$ there exists a unique positive solution $v = v\left( \varepsilon, \gamma, \sigma^2 T, \sigma_\Xi^2, \sigma_\beta^2 \right) > 0$ to the equation
    $$ 
   0=
    \frac{\left( 
     \frac{\sigma_{e}}{v}-\vare \gamma
     \right)^2}
     {1 - \gamma\sigma^2 T (\frac{\sigma_{e}}{v}-\vare\gamma) } 
     - 
     \frac{2\vare }{\sigma^2 T } 
     \log \left( 1-\gamma\sigma^2 T \left(\frac{\sigma_{e}}{v}-\vare\gamma\right)  \right)
     -
      \left( \frac{\sigma_\Xi^2}{\sigma^2 T}+\vare^2\gamma^2 \frac{\sigma_\beta^2}{\sigma^2 T} \right) 
$$
satisfying 
\begin{equation}\label{bound:v}
1>\frac{v\vare\gamma}{\sigma_{e}}>\frac{1}{1+\frac{1}{\vare\gamma^2 \sigma^2 T}} \ .
\end{equation}
\end{lemma}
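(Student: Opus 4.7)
The plan is to reduce the equation to a one-variable monotonicity statement by a well-chosen change of variable, then apply the intermediate value theorem. Set
$$x := \frac{\sigma_e}{v} - \vare\gamma \ ,$$
so $v > 0$ is recovered from $x$ by $v = \sigma_e/(x+\vare\gamma)$. A short computation shows that the bound \eqref{bound:v} is equivalent to
$$x \in \Bigl(0, \frac{1}{\gamma\sigma^2 T}\Bigr) \ ,$$
equivalently $1 - \gamma\sigma^2 T\cdot x \in (0,1)$. In these variables the equation becomes $f(x)=0$, where
$$f(x) := \frac{x^2}{1-\gamma\sigma^2 T\cdot x} - \frac{2\vare}{\sigma^2 T}\log\bigl(1-\gamma\sigma^2 T\cdot x\bigr) - C, \qquad C := \frac{\sigma_\Xi^2 + \vare^2\gamma^2\sigma_\beta^2}{\sigma^2 T} > 0 \ .$$
So it suffices to show $f$ has a unique zero in $(0,1/(\gamma\sigma^2 T))$.

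Next I would analyze the boundary behavior. As $x\downarrow 0$ both the rational term and the logarithmic term vanish, so $f(x) \to -C < 0$. As $x\uparrow 1/(\gamma\sigma^2 T)$, the factor $1-\gamma\sigma^2 T\cdot x$ tends to $0^+$, so $x^2/(1-\gamma\sigma^2 T x) \to +\infty$ and $-\log(1-\gamma\sigma^2 T x) \to +\infty$; hence $f(x) \to +\infty$. By continuity and the intermediate value theorem, at least one zero exists.

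The core of the argument is strict monotonicity. Writing $a = \gamma\sigma^2 T$ for brevity,
$$f'(x) = \frac{x(2-ax)}{(1-ax)^2} + \frac{2\vare a}{\sigma^2 T(1-ax)} \ .$$
For $x\in(0,1/a)$ one has $ax\in(0,1)$, so $2-ax > 1$, hence the first term is strictly positive; the second term is non-negative, and is strictly positive whenever $\vare\gamma>0$. Therefore $f$ is strictly increasing on $(0,1/a)$, and the zero provided by the IVT is unique. Pulling back through $v = \sigma_e/(x+\vare\gamma)$ yields the unique positive $v$ satisfying \eqref{bound:v}.

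Finally I would sweep up the degenerate cases included in the hypothesis $\vare,\gamma \geq 0$. When $\gamma=0$ the interval $(0,1/(\gamma\sigma^2 T))$ is $(0,\infty)$ and, expanding via $\log(1-\gamma\sigma^2 T x) = -\gamma\sigma^2 T x + O(\gamma^2)$, the equation degenerates to $x^2 = C = \sigma_\Xi^2/(\sigma^2 T)$, giving the unique solution $v = \sigma\sqrt{T}$; the case $\vare=0,\gamma>0$ follows from the same monotonicity argument since only the second non-negative term in $f'$ drops. The main (but mild) obstacle is simply the algebra of verifying that the two boundary conditions \eqref{bound:v} correspond exactly to $x\in(0,1/a)$; beyond that, the monotonicity of $f$ makes the argument essentially automatic.
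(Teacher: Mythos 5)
Your proof is correct and follows essentially the same route as the paper: both substitute $x=\sigma_e/v-\vare\gamma$, observe that the resulting function of $x$ increases on $(0,1/(\gamma\sigma^2 T))$ from $-C$ (equivalently $F(0)=0$) to $+\infty$, apply the intermediate value theorem, and translate $x\in(0,1/(\gamma\sigma^2 T))$ back into the bound \eqref{bound:v}. The only difference is that you spell out the derivative computation that the paper dismisses as ``clearly increasing.''
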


\begin{rmk}
    (i) Thanks to \eqref{bound:v}, we define 
    \begin{align}\label{eq:terminallambda} 
    \Lambda:=\frac{\sigma_{e}}{v}-\vare\gamma>0
    \end{align}
    which will turn out to be the value of Kyle's lambda, i.e. sensitivity of the price in the total demand $Y$, at time $t=T$, see Eq. \eqref{eq:pricedyn}.

    (ii) For $\vare=0$, we have that $ \Xi =V_f$ and $v$ solves the quadratic equation
        $$ \frac{v^2}{\sigma^2 T} 
   =
    \frac{1}
     {1 -  \frac{\sigma_\Xi\gamma\sigma^2 T}{v} } \ .
$$
In this case, the solution of this equation is $v= \frac{1}{2}\sigma_\Xi \gamma \sigma^2 T + \frac{\sigma\sqrt{T}}{2}\sqrt{4+\gamma^2 \sigma_\Xi^2 \sigma^2 T}$, which is in agreement with \cite[Proposition 3]{cho}.

(iii) For $\gamma=0$, the informed trader is risk neutral and optimizes her expected wealth and $v^2=\sigma^2 T$ leads to the classical Kyle model with Gaussian information \cite{kyle}.
\end{rmk}
In order to construct an equilibrium, given $v$ in Lemma \ref{lem:v} and $\Lambda$ in \eqref{eq:terminallambda}, define the following deterministic continuous functions
\begin{align}\label{eq:defdet}
    p_t & = \frac{1}{\Lambda^{-1}-{\gamma\sigma^2}(T-t)} \ ,\quad 
    k_t    = \frac{1-\frac{v\vare\gamma}{\sigma_e}\gamma\sigma^2\Lambda(T-t)}{1-\gamma\sigma^2\Lambda(T-t)} \ ,\quad 
    \rb_t  = \frac{k_t\ub}{\sigma_e v} \\ \, \notag
    \Sigma_t &=\begin{bmatrix}
     \sigma^2 t + \sigma_\beta^2 & 0\\ 
    0           & \sigma^2_\Xi
   \end{bmatrix}
   -
   \frac{\sigma^2 \int_0^t k_{s}^2 ds}
        {\sigma_e^2 v^2}
   \ub \ub^\top. 
   \end{align}
   Define also $m$ as the root of the equation 
   $$-m+\frac{m_\Xi -\vare\gamma m_\beta - \frac{\sigma_e}{v}m}{\Lambda} (k_0-1)=0$$ and
   \begin{align*}
 q_t   &= \frac{m_\Xi -\vare\gamma m_\beta - \frac{\sigma_e}{v}m}{1-\Lambda{\gamma\sigma^2}(T-t)} \ , \quad  
        \ab_t  = \begin{pmatrix}-m_\beta\\m_\Xi \end{pmatrix} + \frac{b_t}{\sigma_e v} \ub \ ,\\
     b_t  &=  -m + \int_t^T\gamma\sigma^2 q_s \left( k_s - \frac{v \varepsilon \gamma}{\sigma_e} \right) ds \ . \nonumber
   \end{align*}
Additionally, define the projection
$$x\in \R^2\mapsto\Phi(x)   = x - \frac{\wb^\top x}{\wb^\top \ub } \ub =x - \frac{\wb^\top x}{\sigma_e^2 } \ub$$
and 
$ \left( \Phi_{s,t} \ ; \ s \leq t < T \right)$ the flow of the equation
\begin{align}\label{eq:resolvent}
    dx_t=-\sigma^2 \rb_t  \left(\rb_t -e_1\right)^\top \Sigma_t^{-1}x_t dt
\end{align}
so that $t\mapsto\Phi_{s,t}(x)$ solves this equation with initial value $\Phi_{s,s}(x)=x\in \R^2$. %Here $e_1$ is the first canonical basis vector. 

The following Lemma lists properties of these quantities that we will use. As we will see, $\Sigma_T$ will not be invertible. In particular, this non-invertibility of $\Sigma_T$ requires us to explicitly compute the expansion of $\Sigma^{-1}_t$ as $t\to T$. This point will be needed to study the mean-reversion properties of $\Phi_{s,t}$, which will play an important role.

   \begin{lemma}\label{lem:sigma}
   Given the choice of $v$ as in Lemma \ref{lem:v},
 $\Sigma_t$ is invertible for $t<T$ with inverse given by
\begin{align*}%\label{eq:sigma-1}
    \Sigma_t^{-1}  & =     \begin{bmatrix}
     \sigma^2 t + \sigma_\beta^2 & 0 \\
    0           & \sigma^2_\Xi
   \end{bmatrix}^{-1} \\ \notag
  & \quad  +
   \frac{\sigma^2 \int_0^t k_{s}^2 ds}
        {\sigma_{e}^2 v^2}
   \frac{
   \begin{bmatrix}
     \sigma^2 t + \sigma_\beta^2 & 0 \\
    0           & \sigma^2_\Xi
   \end{bmatrix}^{-1}
   \ub
   \ub^\top
   \begin{bmatrix}
     \sigma^2 t + \sigma_\beta^2 & 0 \\
    0           & \sigma^2_\Xi
   \end{bmatrix}^{-1}
   }
   {
    1 - f_{t}
    } \ , \\ 
    f_{t} &:= 
   \frac{\sigma^2 \int_0^t k_{s}^2 ds}
        {\sigma_{e}^2 v^2}
   \ub^\top
   \begin{bmatrix}
     \sigma^2 t +  \sigma_\beta^2 & 0 \\
    0           & \sigma^2_\Xi
   \end{bmatrix}^{-1}
   \ub \ .\notag
\end{align*}
In particular,  $\Sigma_T=\frac{(\sigma^2 T +\sigma_\beta^2)\sigma^2_\Xi}{\sigma_{e}^2 }
     \vb\vb^\top$ and we have the asymptotics
\begin{align}\label{eq:sigma-equivalent}
\notag
\wb^\top\Sigma_t^{-1}\wb& \sim_{t\uparrow T}\frac{
   |\wb|^4
   }
   {  \sigma^2(T-t)\left(\frac{\sigma_{e}^2}{v^2}-\vare^2\gamma^2\right)} \ ,\quad
   \vb^\top\Sigma_t^{-1}\vb \sim_{t\uparrow T} \frac{
   \sigma_{e}^2
   }
   {  (\sigma^2 T +\sigma_\beta^2)\sigma^2_\Xi} \ ,\\ 
   \wb^\top\Sigma_t^{-1}\vb& \sim_{t\uparrow T}\gamma\vare \left(\frac{1}{\sigma^2 T+\sigma^2_\beta}-\frac{1}{\sigma^2_\Xi}\right)+\frac{2(\gamma\vare)^3
   }
   { (\sigma^2 T+\sigma^2_\beta)\left(\frac{\sigma_{e}^2}{v^2}-\vare^2\gamma^2\right)} \ .
\end{align}
Additionally, $ \left( \Phi_{s,t} \ ; \ s \leq t < T \right)$ is well defined and satisfies 
for all $(s,x)\in [0,T)\times \R^2$
      \begin{align*}%\label{eq:limphi}
      \Phi_{s,T}(x):= \lim_{t\uparrow T} \Phi_{s,t}(x)=\Phi(x)
   \end{align*}
   and
$t \mapsto k_t$ satisfies the identity 
   \begin{align}\label{eq:defv}
       \frac{1}{T}\int_0^T k_s^2 ds=\frac{v^2}{\sigma^2 T} \ .
   \end{align}

\end{lemma}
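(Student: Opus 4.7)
The plan is to exploit the rank-one structure of $\Sigma_t$ and then read off the inverse, asymptotics and flow limit from explicit formulas. Writing $\Sigma_t = D_t - c_t\, \ub \ub^\top$ with $D_t = \diag(\sigma^2 t + \sigma_\beta^2, \sigma_\Xi^2)$ and $c_t = \sigma^2\int_0^t k_s^2\, ds / (\sigma_{e}^2 v^2)$, the Sherman--Morrison identity immediately yields the claimed formula for $\Sigma_t^{-1}$ with $f_t = c_t\, \ub^\top D_t^{-1}\ub$, and reduces invertibility at time $t$ to $1 - f_t \ne 0$.

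I would next prove the integral identity \eqref{eq:defv}, which is the lynchpin of the computation of $\Sigma_T$. Writing $k_t = \rho + (1-\rho)/(1 - \alpha(T - t))$ with $\rho := v\vare\gamma/\sigma_{e}$ and $\alpha := \gamma\sigma^2 \Lambda$, termwise integration of $k_t^2$ gives a closed form for $\sigma^2 \int_0^T k_s^2\, ds/v^2$; after substituting $1-\rho = v\Lambda/\sigma_{e}$, I expect this closed form to rearrange exactly into the defining equation of $v$ from Lemma \ref{lem:v}. Once \eqref{eq:defv} is in hand, $c_T = 1/\sigma_{e}^2$ (because $\ub^\top D_T^{-1}\ub = \sigma_{e}^2$), so $\Sigma_T = D_T - \ub\ub^\top/\sigma_{e}^2$; the identity $\Sigma_T = \frac{(\sigma^2 T + \sigma_\beta^2)\sigma_\Xi^2}{\sigma_{e}^2}\,\vb\vb^\top$ is then a $2\times 2$ verification (alternatively, use $\Sigma_T \wb = D_T \wb - (\ub^\top \wb/\sigma_e^2)\ub = \ub - \ub = 0$ together with $\vb \perp \wb$ to pin down the range).

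For the asymptotics, a Taylor expansion at $t = T$ using $k_T = 1$ and $(d/dt)\ub^\top D_t^{-1}\ub|_{t=T} = -\vare^2\gamma^2\sigma^2$ yields $1 - f_t = \sigma^2(T-t)(\sigma_{e}^2/v^2 - \vare^2\gamma^2)/\sigma_{e}^2 + O((T-t)^2)$. The three stated equivalents then follow from the formula $\wb^\top \Sigma_t^{-1}\wb = \wb^\top D_t^{-1}\wb + c_t(\ub^\top D_t^{-1}\wb)^2/(1-f_t)$ and its analogues with $\vb$ in place of one or both $\wb$'s. The key observation is that $\ub^\top D_T^{-1}\wb = |\wb|^2 \ne 0$ so the singular piece of $\Sigma_t^{-1}$ blows up like $1/(T-t)$ when sandwiched between $\wb$'s, whereas $\ub^\top D_t^{-1}\vb = \vare\gamma \sigma^2(T-t)/(\sigma^2 t + \sigma_\beta^2)$ vanishes linearly, so the singular piece contributes only a finite limit to $\vb^\top \Sigma_t^{-1}\vb$ and $\wb^\top \Sigma_t^{-1}\vb$. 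Invertibility of $\Sigma_t$ for $t<T$ follows from the positive-semi-definite character of $\Sigma_t$ as a conditional covariance in the filtering interpretation, combined with the strict positivity of the leading order of $1 - f_t$ near the endpoint.

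For the flow, the crucial observation is that $\rb_t = k_t\, \ub /(\sigma_{e} v)$ is everywhere proportional to $\ub$, so the vector field $x \mapsto -\sigma^2\, \rb_t (\rb_t - e_1)^\top \Sigma_t^{-1} x$ has image in $\Span(\ub)$. Hence $\Phi_{s,t}(x) - x \in \Span(\ub)$, i.e.\ $\Phi_{s,t}(x) = x + \alpha_{s,t}(x)\ub$ for a scalar $\alpha_{s,t}(x)$, and everything reduces to tracking $y_t := \wb^\top \Phi_{s,t}(x) = \wb^\top x + \sigma_{e}^2\, \alpha_{s,t}(x)$. Plugging the previous asymptotics into $\dot y_t = -\sigma^2 (\wb^\top \rb_t)\, (\rb_t - e_1)^\top \Sigma_t^{-1} x_t$, using $\wb^\top \rb_t = k_t \sigma_e/v$ and $(\rb_t - e_1)^\top \wb = k_t\sigma_e/v - \vare\gamma$, the leading singular piece of $\Sigma_t^{-1}$ isolates $y_t$ and produces, near $t = T$, the scalar ODE $\dot y_t = -\kappa\, y_t/(T-t)$ with $\kappa = (\Lambda + \vare\gamma)/(\Lambda + 2\vare\gamma) > 0$. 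Therefore $y_t \sim C(T-t)^\kappa \to 0$, which forces $\alpha_{s,T}(x) = -\wb^\top x/\sigma_{e}^2$, i.e.\ $\Phi_{s,T}(x) = \Phi(x)$. I expect the main delicate point to be precisely this last step: the linear ODE is singular at $t = T$ and $\Sigma_t^{-1}$ blows up with a very specific rank-one structure, so one needs the precise cancellation coming from $k_T = 1$ and the identity $\sigma_e^2/v^2 - \vare^2\gamma^2 = \Lambda(\Lambda + 2\vare\gamma)$ to guarantee the mean-reversion $y_t \to 0$ and thereby single out the correct target $\Phi(x)$.
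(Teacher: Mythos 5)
Your route is essentially the paper's: Sherman--Morrison for $\Sigma_t^{-1}$, a closed form for $\int_0^T k_s^2\,ds$ matched against the defining equation of $v$, Taylor expansion of $1-f_t$ at $t=T$ (using $k_T=1$ and $f_T=1$) for the asymptotics, and the observation that the vector field of \eqref{eq:resolvent} points along $\ub$ so that the flow reduces to a scalar ODE in $y_t=\wb^\top x_t$ with a $-\kappa y_t/(T-t)$ singularity, hence $y_T=0$. Your termwise integration of $k_t=\rho+(1-\rho)/(1-\alpha(T-t))$ versus the paper's change of variables, and your identification of $\Sigma_T$ via $\Sigma_T\wb=D_T\wb - (\ub^\top\wb/\sigma_e^2)\ub = 0$ plus orthogonality, are cosmetic improvements. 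The explicit exponent $\kappa = (\Lambda+\vare\gamma)/(\Lambda+2\vare\gamma)$ and the factorization $\sigma_e^2/v^2-\vare^2\gamma^2=\Lambda(\Lambda+2\vare\gamma)$ are correct and more transparent than the paper's qualitative ``all terms in front of $-\wb^\top x_t/(T-t)$ are positive.''

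There is, however, a genuine gap in your invertibility argument for $t<T$. You invoke that $\Sigma_t$ is positive semi-definite ``as a conditional covariance in the filtering interpretation,'' but that interpretation ($\Sigma_t=\gamma_t^{\P}$) is established only in Step~2 of the proof of Theorem~\ref{thm:main}, which itself relies on Lemma~\ref{lem:sigma}; the appeal is therefore circular. Moreover, even granting semi-definiteness you would only obtain $1-f_t\geq 0$, and combined with strict positivity near $t=T$ this does not rule out an interior zero of $1-f_t$ (the product $c_t\cdot\ub^\top D_t^{-1}\ub$ is a product of an increasing and a decreasing function, so $1-f_t$ is not obviously monotone). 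What is needed is a direct argument showing $1-f_t>0$ on $[0,T)$: the paper writes out $\frac{\sigma_e^2(1-f_t)}{(T-t)\sigma^2}$ as an explicit combination of $\Lambda$-terms and concludes from this formula, and you should either reproduce such a computation or prove positivity another way.

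One small numerical discrepancy you should be aware of: following your plan, the singular contribution to $\wb^\top\Sigma_t^{-1}\vb$ has prefactor $c_T\cdot(\wb^\top D_T^{-1}\ub)=|\wb|^2/\sigma_e^2=(1+\vare^2\gamma^2)/\sigma_e^2$, which gives a limit $\gamma\vare\bigl(\tfrac{1}{\sigma^2T+\sigma_\beta^2}-\tfrac{1}{\sigma_\Xi^2}\bigr)+\tfrac{(1+\vare^2\gamma^2)\vare\gamma}{(\sigma^2T+\sigma_\beta^2)(\sigma_e^2/v^2-\vare^2\gamma^2)}$ rather than the paper's $2(\gamma\vare)^3$ in the numerator of the second term. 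This looks like a slip in the paper (since $\wb^\top D_T^{-1}\ub=|\wb|^2\neq 2\vare^2\gamma^2$ in general); it is harmless downstream, where only boundedness of $\wb^\top\Sigma_t^{-1}\vb$ is used, but you should report it rather than try to reproduce the paper's constant.
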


Given these functions, for any given continuous $\Fc$-semimartingale $\widetilde{Y}$, which is a candidate for an aggregate demand, define $\chi = \chi(\widetilde{Y})$ by the linear Volterra equation
$$\chi_t(\widetilde{Y})=\gamma\sigma^2\int_0^t \left( p_s \chi_s(\widetilde{Y}) +q_s \right)ds \ + \ \widetilde{Y}_t$$
and the candidate equilibrium pricing rule 
\begin{align*}%\label{eq:deffp}
    \fP^*_t(\widetilde{Y}) := p_t \chi_t(\widetilde{Y}) + q_t \ .
\end{align*}

The linear structure of the pricing rule $\fP^*$ in $\widetilde Y$ is due to the Gaussianity of the distribution and justified in Section \ref{s.PricingRule} where we provide necessary conditions on the equilibrium. This section also justifies the expressions for $p,q$ by a martingality argument. The expressions of the other deterministic functions are more challenging to find. The equality \eqref{eq:defv} is a consequence of a fixed point approach in filtering problems between two probability measures that we introduce, and it pinpoints the value of $v$ in Lemma \ref{lem:v}. 
We now provide our main existence result, whose proof can be found in Section \ref{s.ProofMain}.
\begin{thm}[Existence of Equilibrium]
\label{thm:main}
    There exists a unique solution $(\chi^*,Z^\Q)$ to the system of forward stochastic differential equations
    \begin{align}
\label{eq:chi}  \chi^*_t&= \int_0^t [\gamma\sigma^2 (p_s \chi^*_s+q_s)+\sigma^2 \left(\rb_s -e_1\right)^\top \Sigma_s^{-1}((Z_s^\Q-\beta, \Xi )^\top-\mu^\P_s)]ds+Z_t \ , \\ 
%\label{eq:zq}
Z_t^\Q &=  \int_0^t\gamma\sigma^2  (p_s \chi_s^*+q_s)ds+Z_t \ , \nonumber
\end{align}
where $\mu^\P$ is given by 
\begin{align}
    \label{eq:expmup}
        \mu_t^\P =  \ab_t +\rb_t \chi^*_t+\gamma\sigma^2\int_0^t (p_s\chi^*_s+q_s) \Phi \left(e_1-\rb_s \right) ds \ .
\end{align}
Define 
\begin{align}\label{eq:gausscontrol}
dX^*_t=\sigma^2 \left(\rb_t -e_1\right)^\top \Sigma_t^{-1}((Z_t^\Q-\beta, \Xi )^\top-\mu^\P_t)dt \ ,
\end{align}
and for all $\tilde Y\in\Sc(\Fc)$, $t\in [0,T]$,
$$\fP^*_t( \tilde Y ):=p_t \chi_t(\tilde Y)+q_t \ , $$
where $\chi(\tilde Y)$ solves
$$\chi_t(\tilde Y)=\int_0^t \gamma\sigma^2 (p_s \chi_s(\tilde Y)+q_s)ds+\tilde Y_t \ ,\mbox{ for }t\in [0,T] \ .$$
Then, $(X^*, \fP^*)$ forms an equilibrium where the inconspicuous trading property
\begin{align}\label{eq:incons}
    \E\Big[\frac{dX^*_t}{dt}|\Fc^{X^*+Z}_t\Big]=0
\end{align}
holds and conditionally on $\Fc^{X^*+Z}_t$, and $(Z_t^\Q-\beta, \Xi )^\top $ is $\Nc(\mu_t^\P,\Sigma_t)$.

\end{thm}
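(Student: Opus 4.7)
The statement bundles four claims that I would establish in order: (i) strong existence and uniqueness of $(\chi^*,Z^\Q)$ solving \eqref{eq:chi}; (ii) the Gaussian filtering identity $\mathcal{L}((Z^\Q_t-\beta,\Xi)^\top \mid \Fc^{X^*+Z}_t)=\Nc(\mu^\P_t,\Sigma_t)$ together with the inconspicuous trading property \eqref{eq:incons}; (iii) the rationality condition \eqref{eq:rationality} for $\fP^*$; and (iv) the optimality of $X^*$ in \eqref{eq:optimality}. For (i), since $\mu^\P$ is affine in $\chi^*$ and $(Z^\Q-\beta,\Xi)^\top$ by \eqref{eq:expmup}, substituting the $Z^\Q$ equation into the $\chi^*$ equation produces a scalar linear SDE for $\chi^*$ with deterministic time-dependent coefficients driven by $Z, \Xi, \beta$. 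The only delicate point is integrability of the drift near $t=T$: $\Sigma_t^{-1}$ blows up along the $\wb$-direction by \eqref{eq:sigma-equivalent}, but $\rb_t-e_1$ projects onto the complementary $\vb$-direction as $t\uparrow T$ (consistently with $\Phi_{s,T}=\Phi$), so $(\rb_t-e_1)^\top\Sigma_t^{-1}$ remains bounded. This yields a unique Gaussian solution, from which $Z^\Q$ is recovered by integration.

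For (ii), I would invoke the Kalman--Bucy filter. Under $\P$, the unobserved signal $S_t=(Z^\Q_t-\beta,\Xi)^\top$ is Gauss--Markov with a deterministic drift and diffusion $\sigma e_1\, dB_t$, while the observation $X^*+Z$ has dynamics \eqref{eq:gausscontrol} whose sensitivity to $S_t-\mu^\P_t$ is $\sigma^2(\rb_t-e_1)^\top\Sigma_t^{-1}$. It suffices to verify that the explicit $\mu^\P_t$ and $\Sigma_t$ given by \eqref{eq:defdet} and \eqref{eq:expmup} satisfy the filter's innovation ODE and matrix Riccati equation; these reduce to algebraic identities for $k_t,\rb_t,\ab_t,b_t$ that close exactly through the fixed-point \eqref{eq:defv}. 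The inconspicuous property then follows at once because the drift of $X^*$ is linear in the centered vector $S_t-\mu^\P_t$, whose $\Fc^{X^*+Z}$-conditional mean vanishes by the very definition of $\mu^\P_t$.

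For (iii), once (ii) is in hand, $\E[\Xi-\gamma\vare(X^*_T+\beta)\mid \Fc^{X^*+Z}_t]$ is an affine function of $\mu^\P_t$ and $\chi^*_t$. The asymptotic $\Sigma_T=\frac{(\sigma^2 T+\sigma^2_\beta)\sigma^2_\Xi}{\sigma_e^2}\vb\vb^\top$ shows that at $t=T$ the conditional law is supported on the $\vb$-line, so $\wb^\top S_T$ becomes $\Fc^{X^*+Z}_T$-measurable and equal to $\wb^\top\mu^\P_T$; this pins down the terminal rationality via $\Lambda=\sigma_e/v-\vare\gamma$. The ODEs satisfied by $p_t,q_t$ encoded in \eqref{eq:defdet} are precisely the martingality conditions propagating this terminal identity backward to all $t\in[0,T]$, reducing (iii) to a direct algebraic verification.

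The hardest step is (iv), the verification of optimality of $X^*$. I would invoke the framework developed in Sections \ref{s.PricingRule} and \ref{sect:StructureOptimizer}: under the martingale measure $\Q$, the HJB verification recasts the dynamic supremum in \eqref{eq:optimality} as a pointwise terminal problem, whose maximizer is characterized by the Brenier optimal transport plan between the $\Q$-law of $\chi^*_T$ and the law $\nu$ of $\Xi$. Gaussianity makes both laws Gaussian---with variances $v^2$ and $\sigma_\Xi^2$ respectively---and the compatibility condition for $v^2$ is exactly the scalar equation solved in Lemma \ref{lem:v}; the Brenier map is then linear and the coupling realized by $X^*$ through \eqref{eq:gausscontrol} is optimal. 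Admissibility follows because $\fP^*_t(X^*+Z)$ is Gaussian with variance uniformly bounded on $[0,T]$. The main obstacle is welding these three ingredients---HJB verification, the Girsanov change of measure to $\Q$, and the optimal transport identification---into one coherent verification argument, and in particular reconciling the degeneracy of $\Sigma_t$ at $t=T$ with the HJB terminal condition.
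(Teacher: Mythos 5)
Your overall decomposition into (i) solvability, (ii) filtering/inconspicuousness, (iii) rationality, (iv) optimality matches the architecture of the paper's proof (Steps 1--4 in Section~\ref{s.ProofMain}), and your outline of (ii)--(iv) is broadly in the right spirit. However, the treatment of step (i) contains a genuine and consequential error.

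You claim that $(\rb_t-e_1)^\top\Sigma_t^{-1}$ remains bounded as $t\uparrow T$ because $\rb_t-e_1$ ``projects onto the complementary $\vb$-direction.'' This is false. Since $\rb_T=\frac{\ub}{\sigma_e v}$ (as $k_T=1$) and $\wb^\top\ub=\sigma_e^2$, we get
\begin{equation*}
(\rb_T-e_1)^\top\wb \;=\; \frac{\sigma_e}{v}-\vare\gamma \;=\;\Lambda\;>\;0,
\end{equation*}
so $\rb_t-e_1$ has a strictly nonzero $\wb$-component in the limit. Combined with $\wb^\top\Sigma_t^{-1}\wb\sim c/(T-t)$ from~\eqref{eq:sigma-equivalent}, the coefficient $(\rb_t-e_1)^\top\Sigma_t^{-1}$ does blow up like $1/(T-t)$. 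Moreover, the interpretation of $\Phi_{s,T}=\Phi$ is the opposite of what you suggest: the projection $\Phi$ annihilates the $\wb$-component \emph{because} of that singularity, which forces the flow to collapse $\wb^\top x_t\to 0$; it is a consequence of the blow-up, not evidence that the coefficient is bounded. The paper handles this by passing to the coordinates $A^1_t=\vb^\top(S_t-\mu_t^\P)$, $A^2_t=\wb^\top(S_t-\mu_t^\P)$ and showing that the singular drift acting on $A^2$ is of the form $-\tfrac{c}{T-t}A^2_t\,dt$ with $c>0$, i.e.\ a mean-reverting singularity, which implies both that the solution extends to $[0,T]$ and that $A^2_T=0$. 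A Picard-type argument with a bounded drift, as you propose, would not establish existence, and, more seriously, would fail to produce the terminal identity $A^2_T=0$ that underlies the whole of step (iv): that identity is exactly what identifies $\chi^*_T=I(Z^\Q_T-\beta,\Xi;\Nc(m,v^2))$ and hence the pointwise optimality criterion~\eqref{eq:target}. Without realizing the drift is singular (but benignly so), the bridge from (i) to (iv) is missing.

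Two smaller remarks. In (ii), the signal $Z_t^\Q-\beta$ has drift $\gamma\sigma^2(p_t\chi^*_t+q_t)$, which is $\Fc^Y$-adapted but not deterministic; one must apply the conditionally-Gaussian filtering theory (as the paper does via~\cite[Theorem 12.7]{liptser1977statistics}) rather than plain Kalman--Bucy. In (iv), the optimal transport is between the product law $\mu\otimes\nu$ of $(Z^\Q_T-\beta,\Xi)$ and $\eta=\Lc^\Q(\chi^*_T)$ with the quadratic surplus $\Sc$, not merely between $\nu$ and $\eta$; the two-dimensional source marginal is essential for Lemma~\ref{lemma:gaussian_transportfinalcondition}.
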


\begin{rmk}
(i) We define (for given trading strategy $X$) the measure 
$\Q$ by 
\begin{align*}
       \left.\frac{d\Q}{d\P}\right|_{\Fc_T}
 = & \ \exp\left( -\gamma\int_0^T  P_s dZ_s-\gamma\int_0^T \frac{\gamma \sigma^2 P_s^2}{2}ds\right) 
 \end{align*}
 where $P_s=\fP^*_s(X+Z)$ is the price process. 
As shown in Proposition \ref{p:opt1}, this change of measure removes the drifts in the utility maximization problem of the informed trader and under this probability the informed trader maximizes a $\Q$ utility from terminal wealth only as in Eq.\eqref{eq:uti2}. Similar to the classical Kyle's model \cite{kyle,back1992,back2020optimal,first_paper,ccd}, a simple optimality criterion for the informed trader is to bring the state at terminal time to a value given by the private information, see Eq.\eqref{eq:optimal_utility}. 

(ii) The matrix $\Sigma_t$ continuously interpolates the covariance matrix of $(-\beta, \Xi )^\top $ and the singular matrix $\Sigma_T=\frac{(\sigma^2 T +\sigma_\beta^2)\sigma^2_\Xi}{\sigma_{e}^2 }
     \vb\vb^\top$ which is the $\Fc^{X^*+Z}_T$ conditional covariance of $(Z_T^\Q-\beta, \Xi )^\top $. In fact, at time $T-$, the $\Fc^{X^*+Z}_T$ conditional variance of $(Z_T^\Q-\beta, \Xi )\wb$ is $0$ whereas the conditional variance of $(Z_T^\Q-\beta, \Xi )\vb$ is $\frac{(\sigma^2 T +\sigma_\beta^2)\sigma^2_\Xi}{\sigma_{e}^2 }
     |\vb|^4$ . Thus, for $\vare \gamma>0$, not all information on $\Xi$ is injected to the market at time $T-$ and the market maker only knows a linear combination of $\Xi$ and $Z_T^\Q$.

(iii) In many equilibria, the informed trader brings the total demand (or a process related to total demand) to a terminal total demand given by her private information. For example, in \cite{back1992}, for some function $F$, one has 
$$X^*_T=Y_T-Z_T=F(\Xi)-Z_T,$$
which means that the informed trader has to be counterpart to the rest of the market and absorbs all the noise trading. Similarly, with risk-aversion, \cite{cho,bose2020kyle,bose2021multidimensional} show that the optimal trading strategy of the informed trader is to enforce $\chi_T^*=F(\Xi)$.
In our framework, such equalities do not hold and the terminal value of $\chi^*_T$ is random. 

\end{rmk}
    
% --------------------------------------------------------------------

\subsection{Properties of the equilibrium}
The proof of the next theorem can be found in Section \ref{proof-3}.
\begin{thm}[Properties of the Equilibrium]
\label{thm:prop} 
${}$

    (1) In equilibrium, the price satisfies 
    \begin{align}
        \label{eq:pricedyn}
        dP_t=d\fP^*_t(\chi^*)=\frac{dY_t}{\Lambda^{-1}-\gamma\sigma^2 (T-t)}, \quad t \in [0, T)
    \end{align}
and the terminal price is 
$$P_T=\fP^*_T(\chi^*)= \Xi -\gamma \vare (X^*_T+\beta) \ .$$
The sensitivity of the price in the total demand (Kyle's Lambda) is 
\begin{align}\label{eq:kyleslambda}
    p_t := \frac{1}{\Lambda^{-1}-\gamma\sigma^2 (T-t)} \ . 
\end{align}

        (2) All other parameters being fixed, as $\gamma \downarrow 0$, $$\Lambda\uparrow \frac{\sigma_\Xi}{\sigma\sqrt{T}}\mbox{ and }p_t\uparrow \frac{\sigma_\Xi}{\sigma\sqrt{T}} \ , $$
where the monotonicity is valid on a neighborhood of $0$.

         All other parameters being fixed, as $\vare \downarrow 0$, $$\Lambda\uparrow \frac{\sigma_\Xi}{ \Xi }\mbox{ and }p_t\uparrow\frac{1}{\left( 
     \frac{\sigma_\Xi}{ \Xi }
     \right)^{-1}-\gamma\sigma^2 (T-t)} \ , $$
where the monotonicity is valid on a neighborhood of $0$ and $ \Xi $ is the solution to 
        $$ \frac{\Xi^2}{\sigma^2 T} 
   =
    \frac{1}
     {1 -  \frac{\sigma_\Xi\gamma\sigma^2 T}{ \Xi } } \ .
$$

        (3) The expected utility of the informed trader conditional to her private information $\{ \Xi = \xi ;\beta=b\}$ is  
\begin{align}\label{eq:utii}
   & -e^{-\gamma u(0,0)}  
   \E\left[\exp\left(-\gamma \Gamma^c(Z_T-b, \xi )\right)\right] \\ \nonumber
  =& -e^{-\gamma u(0,0)} 
   \frac{1}{\sqrt{1-\frac{v\Lambda}{\sigma_e}\vare \gamma^2 \sigma^2 T}}   
   \exp\Big(
    -\frac{\gamma}{2}(\frac{ \xi ^2}{\vare \gamma} + \frac{\vare \gamma \sigma_e}{v\Lambda} m^2)
    + \frac{\varepsilon \gamma^2}{2}\frac{ v \Lambda}{\sigma_e} 
\frac{\alpha^2}{1 - \frac{v\Lambda}{\sigma_e}\vare \gamma^2 \sigma^2 T}
   \Big) \ ,
\end{align}
where 
$\alpha 
 = b + m \frac{\sigma_e}{v\Lambda} - \frac{1}{\vare\gamma} \xi  
$,
$u$ solves the partial differential equation (PDE) \eqref{pde:chi1}, and 
$$
    \Gamma^c(z,\xi;\Nc(m,v^2))=\frac{1}{2}\left(-2\xi z -{\vare\gamma}z^2+\frac{v}{\sigma_{e}}(\xi-m_\Xi  +{\varepsilon \gamma} (z+m_\beta))^2+2m (\xi +{\varepsilon \gamma} z)\right) \ .
$$

        (4) The expected profit of the market maker is 
\begin{align}\label{eq:wealthMM}
  \frac{\vare}{\Lambda} \left(1-\vare\gamma \frac{\sigma^2T+\sigma^2_\beta}{\sigma_{e} v}\right)\ln(1-\gamma\Lambda\sigma^2 T)<0 \ .
\end{align}
    
\end{thm}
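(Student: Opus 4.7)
\emph{Part (1).} The plan is to first derive from the SDEs for $\chi^*$ and $X^*$ the identity $\chi^*_t = \int_0^t \gamma\sigma^2 P_s\,ds + Y_t$, where $Y = X^* + Z$, which follows because the extra drift of $\chi^*$ beyond $\gamma\sigma^2(p_s\chi^*_s+q_s)$ is precisely the drift of $X^*$. I would then apply It\^o to $P_t = p_t\chi^*_t + q_t$ and check directly from \eqref{eq:defdet} the ODEs $p'_t = -\gamma\sigma^2 p_t^2$ and $q'_t = -\gamma\sigma^2 p_t q_t$, whereupon the drift terms cancel and \eqref{eq:pricedyn} follows. For the terminal identity $P_T = \Xi - \gamma\vare(X^*_T + \beta)$, I would substitute $t = T$ into $p_T\chi^*_T + q_T$ and use $\Phi_{s,T} = \Phi$ from Lemma \ref{lem:sigma} together with the defining equation for $m$; the resulting expression reproduces the rationality condition \eqref{eq:rationality}.

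\emph{Part (2).} Let $F(v;\gamma,\vare) = 0$ denote the scalar equation of Lemma \ref{lem:v}. As $\gamma \downarrow 0$, I would expand the logarithm to second order, cancel the leading $\vare/(\sigma^2T)$ terms, and obtain $v^2/(\sigma^2 T) \to 1$ and hence $\Lambda = \sigma_e/v - \vare\gamma \to \sigma_\Xi/(\sigma\sqrt{T})$. For $\vare \downarrow 0$ the equation degenerates to the quadratic of Remark (ii) after Lemma \ref{lem:v}, yielding the claimed limit. Monotonicity near $0$ would follow by implicit differentiation of $F=0$: positivity of $\partial_v F$ at the solution (which also underlies uniqueness in Lemma \ref{lem:v}) combined with signing $\partial_\gamma F$ and $\partial_\vare F$ near the limit determines the sign of $\partial_\gamma v$ and $\partial_\vare v$, and hence the monotonicity of $\Lambda$ and $p_t$.

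\emph{Part (3).} I would invoke the measure $\Q$ defined in the remark after Theorem \ref{thm:main} and the analysis of Sections \ref{s.PricingRule}--\ref{sect:StructureOptimizer} to reduce the conditional utility, evaluated at $X^*$, to
\begin{equation*}
-e^{-\gamma u(0,0)}\,\E^{\Q}\!\bigl[\exp\bigl(-\gamma\Gamma^{c}(Z^{\Q}_T-\beta,\Xi)\bigr)\,\big|\,\Fc_0\bigr].
\end{equation*}
Conditionally on $\{\Xi = \xi,\,\beta = b\}$ and under $\Q$, $Z^{\Q}_T$ is Gaussian with mean and variance read off from the Girsanov drift in the definition of $\Q$. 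Substituting the explicit form of $\Gamma^c$ makes the exponent a quadratic in $Z^{\Q}_T - b$, so the claim reduces to evaluating $\E[\exp(aW + cW^{2})]$ for a Gaussian $W$; completing the square produces the prefactor $(1 - (v\Lambda/\sigma_e)\vare\gamma^{2}\sigma^{2}T)^{-1/2}$ and the quadratic dependence on $\alpha$ appearing in \eqref{eq:utii}.

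\emph{Part (4).} The representative market maker's profit is $-Y_T(V_f - P_T) - \int_0^T Y_s\,dP_s$. Substituting $dP_s = p_s\,dY_s$ from Part (1), integrating by parts, and using $V_f = \Xi - G$ together with $P_T = \Xi - \gamma\vare(X^*_T + \beta)$ reduces the expected profit to a linear combination of second moments of the jointly Gaussian vector $(Y, \Xi, \beta, G)$ under $\P$. The integral $\int_0^T p_s\,ds = -\frac{1}{\gamma\sigma^2}\log(1 - \gamma\sigma^2 T\Lambda)$ produces the logarithm in \eqref{eq:wealthMM}, and collecting the remaining second-moment contributions yields the stated coefficient; negativity is immediate because $\gamma\sigma^2 T\Lambda \in (0,1)$ by Lemma \ref{lem:v} makes the log negative while the prefactor is positive. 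The main obstacle will be Part (3): matching the precise exponent in \eqref{eq:utii} demands careful tracking of the $\Q$-Gaussian parameters and of the cross terms in $\Gamma^c$, where small errors in either the conditional mean/variance of $Z^{\Q}_T$ or in expanding $\Gamma^c$ would mis-scale the $\alpha^2$ and $\xi^2$ coefficients.
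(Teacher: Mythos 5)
Your plan for Parts (1) and (2) matches the paper: (1) differentiate $P_t = p_t\chi^*_t + q_t$ using the ODEs for $p,q$ and read off the terminal value from the optimal-transport identity $\chi^*_T = I(Z^\Q_T-\beta,\Xi;\Nc(m,v^2))$ together with $p_T=\sigma_e/v-\vare\gamma$, $q_T = m_\Xi - \vare\gamma m_\beta - (\sigma_e/v)m$; (2) take limits in the scalar equation of Lemma~\ref{lem:v} and get monotonicity by implicit differentiation. Both are exactly the paper's route.

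For Part (3), your detour through the $\Q$-expectation is equivalent to the paper's computation, but there is a subtlety you should spell out: the Girsanov drift of $Z^\Q$ a priori depends on $(\Xi,\beta)$ via $P_s$, so one cannot naively ``read off'' the $\Q$-conditional law of $Z^\Q_T$ given $\Fc_0$ from the drift. The crucial fact, proved in Step 2 of Proposition~\ref{p:opt1} (Eq.~\eqref{eq:eqlaw}), is that $Z^\Q_T$ is $\Q$-independent of $(\Xi,\beta)$ with law $\Nc(0,\sigma^2 T)$. This is why the paper writes the optimal utility as a plain $\P$-expectation $-e^{-\gamma u(0,0)}\E[e^{-\gamma\Gamma^c(Z_T-b,\xi)}]$ and then just completes the square in a one-dimensional Gaussian integral. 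Your plan arrives at the same integral, but you should explicitly invoke~\eqref{eq:eqlaw} rather than ``reading off the Girsanov drift.''

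Part (4) is where your plan has a genuine gap. You propose to reduce the expected profit to ``second moments of the jointly Gaussian vector $(Y,\Xi,\beta,G)$,'' but $X^*_T$ is a path functional of $Y$ coupled to $(\Xi,\beta)$ through the filtering equations, so this does not directly give a finite-dimensional Gaussian computation. The paper's key step, which your plan omits, is to use the martingality of $P$ to kill $\E[\int Y\,dP]$ and then \emph{condition on $\Fc^Y_T$}: since $Y_T$ is $\Fc^Y_T$-measurable and $X^*_T+\beta = \chi^*_T - (Z^\Q_T - \beta)$, one gets $\E[Y_T(X^*_T+\beta)] = \E[Y_T(\chi^*_T - e_1^\top\mu^\P_T)]$, and the explicit form~\eqref{eq:expmup} of $\mu^\P_T$ produces the coefficient $1-\vare\gamma\frac{\sigma^2 T+\sigma_\beta^2}{\sigma_e v}$ multiplying $\E[Y_T\chi^*_T]$. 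Without this conditioning-and-filtering step you have no mechanism to produce that coefficient; simply substituting $dP = p\,dY$ and integrating by parts leaves you with cross moments of $Y_T$ with $X^*_T$ that you have no way of evaluating. The logarithm does come from a $p_s$-type integral, but via $\E[Y_T\chi^*_T]=\sigma^2\int_0^T e^{\gamma\sigma^2\int_s^T p_r\,dr}\,ds$, not from $\int_0^T p_s\,ds$ directly. You should restructure Part (4) around the conditional-expectation identity and the explicit form of $\mu^\P_T$.
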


\begin{rmk}
    i) The monotonicity in $\gamma$ means that when the informed trader is less risk-averse, the market maker adjusts the prices more aggressively with the arrival of new orders. 
    This aligns with \cite{cho,bose2020kyle}, which noted that this effect arises because a more risk-averse informed trader, due to her accumulated position, is less likely to fully exploit her private information.
    As a consequence, the expected loss of the market maker decreases as $\gamma$ increases and the market can afford to charge a smaller transaction cost, i.e. a smaller Kyle's Lambda. 

    (ii) In the equilibrium we construct, the process $\chi_t^*$ is a path-dependent functional of $Y$ and the equilibrium price is a Markov function of this process. 

    (iii) However, the equilibrium strategy $X^*$ is significantly more complicated than the pricing rule. Indeed, in order to construct the equilibrium, we first postulate that it satisfies the inconspicuous trading property \eqref{eq:incons}. In our Gaussian framework, this property requires us to keep track of the conditional expectation of $(Z_t^\Q-\beta, \Xi )$ which is path-dependent in $\chi$.

\end{rmk}

% --------------------------------------------------------------------
\section{Necessary conditions on Markovian equilibria}
\label{s.PricingRule}

In this section we follow the methodology of \cite{cho} to obtain necessary conditions on inconspicuous equilibria. In particular, we see the natural appearance of the state variable $\chi$ and the measure $\Q$ for example. In subsequent sections, we use these necessary conditions to construct the candidate equilibrium strategies of each agents and prove the existence of the equilibrium. Although these steps are not needed for the proof of the main theorems, we below they are informative on the obtaining the structure of the equilibrium.
Thus, we assume the existence of an equilibrium, where 
$$ dX^*_t=\theta_t dt \ $$
and satisfying the inconspicuousness condition $$\E[\theta_t|\Fc_t^Y]=0 \ .$$

We first focus on the path-dependence of the pricing rule. 
If the informed trader is risk neutral, it is known that $Y_t$ is the relevant state variable and both $\theta$ and the price are functions of $Y_t$ (see \cite{kyle, back1992}). However, in the case the informed trader is risk averse, it is known that $Y$ does not carry all the information needed to construct the equilibrium (see \cite{cho,bose2020kyle,bose2021multidimensional}). In \cite{cho}, it is shown that one needs to introduce a new state variable 
$$d\xi_t=\lambda_t dY_t$$
and in \cite{bose2020kyle,bose2021multidimensional} it is assumed that 
\begin{align*}%\label{eq:defxi}
d\xi_t=\frac{dY_t}{\partial_\xi \chi(t,\xi)}
\end{align*}
for some function $\chi(t,\xi)$. Then, different ingredients of the equilibrium can be written as functions of the current value of $(t,\xi_t)$ and the dependence of $\xi$ on the path of $Y$ carries all the path-dependence in the problem. We note that by applying It\^o's formula to $\chi(t,\xi_t)$, one can show that in all these papers, there exists a new state variable 
$\chi_t$ that 
carries all the path-dependence of the problem and it has the dynamics  
\begin{align}\label{eq:defchi} 
\notag
d\chi_t &= dY_t + B(t,\chi_t)dt \\
\chi_0&=0
\end{align}
for some function $B$. 
The following Proposition shows that in any equilibrium where there is such a state variable satisfying \eqref{eq:defchi}, one has very strong conclusion on the pricing rule and an optimality criterion for the informed trader. Note that we study such equilibria because the equilibrium we construct has this form. Define the surplus function 
\begin{align}\label{eq:surplus}
        \Sc\left( x,\xi \right)
        = \xi x - g \left(x\right)=\xi x - \frac{\varepsilon \gamma}{2} x^2  \ . 
\end{align}
%that will be relevant for our optimal transport formulation of the problem.

\begin{proposition}\label{p:opt1}
    Let $(X^*,\fP^*)$ be an equilibrium such that there exist continuous functions $P,B:[0,T]\times \R\mapsto \R$ such that the pricing rule $\fP^*$ and the absolutely continuous trading strategy  $X^*$ satisfy 
    $\fP^*_t(\tilde Y)=P(t,\chi_t(\tilde Y))$, where
    $$
        \chi_t(\tilde Y)= \int_0^t B(s,\chi_s(\tilde Y))ds+\tilde Y_t \quad \mbox{ and } \quad \E\left[\frac{dX^*_t}{dt}|\Fc_t^{X^*+Z}\right]=0\mbox{ for all }t\in[0,T] \ .
    $$
    Denote $\Gamma(\chi):=\int_0^\chi P(T,s)ds$, and \begin{align}\label{eq:gammac}
    \Gamma^c(z,\xi):=\sup_{x\in \R}\left[\Sc (x-z, \xi)-\Gamma(x) \right] \ . 
    \end{align}
    Assume for all $z,\xi$, \eqref{eq:gammac} admits an optimizer and that the parabolic PDE 
    \begin{align}\label{pde:chi1}0 &=   \ \pa_t u+\frac{\sigma^2}{2}\pa_{\chi\chi}u+\frac{1}{2}\gamma\sigma^2(\pa_\chi u)^2 \\
    u(T,\chi)&=\Gamma(\chi) \notag
    %\label{pde:chi2}
    \end{align}
    admits a smooth solution $u \in C^{1,2}([0,T]\times \R)$.
    For any admissible trading strategy $X$, and $t\geq 0$, define the probability measure $\Q$ by
\begin{align*}
       \left.\frac{d\Q}{d\P}\right|_{\Fc_t}
:= & \ \exp\left( -\gamma\int_0^t  P_s dZ_s-\gamma\int_0^t \frac{\gamma \sigma^2 P_s^2}{2}ds\right) \ , 
\end{align*}
where $P_s=P(s,\chi_s(X+ Z))$.
%\todo{assumptions on $\gamma P$ to have martingality above?}
Then, 
\begin{align}\label{eq:pathdep}
        B(t,\chi)=\gamma\sigma^2 P(t,\chi)=\gamma\sigma^2 \pa_\chi u(t,\chi) \ ,
    \end{align} 
the process 
\begin{align*}
Z_t^\Q := & \ Z_t+\gamma\sigma^2 \int_0^tP_sds
\end{align*} 
is a $(\Q,\Fc)$-Brownian motion, and 
\begin{align}\label{eq:eqlaw}
\Lc^\Q\left(  Z_T^\Q-\beta,  \Xi  \right)    = \mu \otimes \nu =\Lc\left(Z_T-\beta,  \Xi  \right) \ .
\end{align}
    Additionally, any choice of absolutely continuous admissible strategy $X$ ensuring the point wise optimality criterion
\begin{align}\label{eq:target}
\chi_T\in \argmax_{x \in \R}\left[ \Sc \left(x-\left(Z^\Q_T-\beta\right),  \Xi \right)-\Gamma(x) \right]
\end{align}
is optimal and leads to the optimal utility
 \begin{align}
 & -e^{-\gamma u(0,0)}  
   \E\left[\exp\left(-\gamma \Gamma^c(Z_T-\beta, \Xi )\right)| \Xi ,\beta\right] \ .
 \label{eq:optimal_utility}
\end{align}

\end{proposition}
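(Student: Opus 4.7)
The approach combines the market maker's rationality --- which pins down a transport PDE for $P$ --- with the assumed HJB \eqref{pde:chi1} for $u$: matching the two PDEs forces $B = \gamma\sigma^2 P$ and $P = \partial_\chi u$. From there, It\^o on $u(t,\chi_t)$ and Girsanov recast the informed trader's utility as a pure terminal $\Q$-expectation whose pointwise optimizer solves \eqref{eq:target}. Explicitly, rationality makes $P_t = P(t,\chi_t)$ an $(\Fc^Y,\P)$-martingale with terminal value $\Gamma'(\chi_T) = \Xi - \gamma\vare(X^*_T + \beta)$; by inconspicuousness and L\'evy's characterization, $Y$ is an $\Fc^Y$-Brownian motion of variance $\sigma^2 t$. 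It\^o on $P(t,\chi_t)$ with $d\chi_t = B\,dt + dY_t$ yields via vanishing drift
\[
\partial_t P + B\,\partial_\chi P + \tfrac{\sigma^2}{2}\partial_{\chi\chi}P = 0,\qquad P(T,\cdot) = \Gamma'.
\]
Differentiating \eqref{pde:chi1} in $\chi$ shows $\partial_\chi u$ obeys $\partial_t f + \gamma\sigma^2 f\,\partial_\chi f + \tfrac{\sigma^2}{2}\partial_{\chi\chi}f = 0$ with the same terminal datum $\Gamma'$. With the choice $B = \gamma\sigma^2\,\partial_\chi u$ the two equations coincide as quasilinear transport PDEs with the same velocity field, so uniqueness of smooth solutions yields $P = \partial_\chi u$ and $B = \gamma\sigma^2 P$.

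Admissibility of $X$ together with Novikov makes $M$ a true $\P$-martingale, hence $\Q$ is an equivalent probability on $\Fc_T$ and $Z^\Q$ is a $(\Q,\Fc)$-Brownian motion of variance $\sigma^2 t$. Since $M_0 = 1$ and $M$ is a $\P$-martingale, $\E^\P[M_T|\Fc_0] = 1$, so $\Q|_{\Fc_0} = \P|_{\Fc_0}$ and in particular $\Lc^\Q(\Xi,\beta) = \nu\otimes\Nc(m_\beta,\sigma^2_\beta)$. As $Z^\Q$ is a $\Q$-Brownian motion starting at $0$ and adapted to $\Fc$, $Z^\Q_T$ is $\Q$-independent of $\Fc_0 \supset \sigma(\Xi,\beta)$ with $\Q$-law $\Nc(0,\sigma^2 T)$; convolution with $-\beta$ and independence from $\Xi$ give $\Lc^\Q(Z^\Q_T - \beta,\Xi) = \mu\otimes\nu$.

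Next, It\^o on $u(t,\chi_t)$ with $d\chi = B\,dt + dX + dZ$, the HJB \eqref{pde:chi1}, and $B = \gamma\sigma^2 P$ produces
\[
\gamma\!\int_0^T\!P\,dX = \gamma\Gamma(\chi_T) - \gamma u(0,0) - \tfrac{\gamma^2\sigma^2}{2}\!\int_0^T\!P^2 ds - \gamma\!\int_0^T\!P\,dZ = \gamma\Gamma(\chi_T) - \gamma u(0,0) + \log M_T.
\]
Substituting into $-\gamma\Wc_T = -\gamma(X_T+\beta)(\Xi - G) + \gamma\int_0^T P\,dX$, integrating out $G$ (still $\Nc(0,\vare)$-distributed and $\Fc_T$-independent under $\Q$ since $M_T \in \Fc_T$), and using $\chi_T = X_T + Z^\Q_T$ (a direct consequence of $B = \gamma\sigma^2 P$ and the definition of $Z^\Q$), one arrives at
\[
\E^\P\bigl[e^{-\gamma\Wc_T}\bigm|\Fc_0\bigr] = e^{-\gamma u(0,0)}\,\E^\Q\!\left[\,e^{-\gamma[\Sc(\chi_T - (Z^\Q_T - \beta),\Xi) - \Gamma(\chi_T)]}\,\bigm|\,\Fc_0\right].
\]
By the Fenchel definition of $\Gamma^c$, the bracket $\Sc(\chi_T - z,\Xi) - \Gamma(\chi_T) \leq \Gamma^c(z,\Xi)$ with equality iff $\chi_T$ attains the supremum in \eqref{eq:gammac}, so the exponent is $\geq -\gamma\Gamma^c(Z^\Q_T - \beta,\Xi)$. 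Taking $\Q$-expectation and changing sign yields an upper bound for the trader's utility $-\E^\P[e^{-\gamma\Wc_T}|\Fc_0]$ that is saturated by strategies satisfying \eqref{eq:target}; formula \eqref{eq:optimal_utility} then follows from the joint-law identification of the previous step.

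The main obstacle is the PDE matching: deriving $B = \gamma\sigma^2\partial_\chi u$ requires interlocking the linear transport equation for $P$ (from rationality) with the Burgers-type equation for $\partial_\chi u$ (from the HJB) and invoking uniqueness of $C^{1,2}$ solutions with matching terminal datum. Secondary technical points include justifying the $\Q$-independence of $G$ from $\Fc_T$ (which rests on $M_T \in \Fc_T$) and the conditional martingality $\E^\P[M_T|\Fc_0] = 1$ underlying the joint-law claim.
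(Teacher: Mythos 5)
Your Steps 2 and 3 are essentially the same as the paper's argument and are correct: the change of measure, the observation that $\E^\P[M_T\mid\Fc_0]=1$ gives $\Q|_{\Fs_0}=\P|_{\Fc_0}$, the identification $\chi_T = X_T + Z_T^\Q$, the Fenchel inequality, and the integration of $G$ all check out. (Note: the paper integrates out $G$ under $\P$ before changing measure, but your version using $M_T\in\Fc_T$ is equally valid.)

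However, Step 1 -- the derivation of $B=\gamma\sigma^2 P$ -- contains a genuine gap. You correctly derive from rationality and inconspicuousness that $P$ solves the linear transport PDE
$\partial_t P + B\,\partial_\chi P + \tfrac{\sigma^2}{2}\partial_{\chi\chi}P = 0$
with terminal datum $\Gamma'$, and correctly observe that $\partial_\chi u$ solves the Burgers equation with the same terminal datum. But then you write ``with the \emph{choice} $B=\gamma\sigma^2\,\partial_\chi u$ the two equations coincide, so uniqueness yields $P=\partial_\chi u$.'' This is circular: $B$ is not a free parameter to be chosen -- it is determined by the given equilibrium, and $B=\gamma\sigma^2 P$ is precisely the conclusion you must prove. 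The single scalar PDE from martingality is under-determined: it is one equation in the two unknown functions $B$ and $P$. You are missing the second relation that the paper supplies from the informed trader's HJB. Because the control $\theta$ in the trader's problem is unconstrained, the first-order optimality in the Hamiltonian forces $\partial_\chi J + \partial_x J + \gamma(P-\Xi) = 0$; differentiating this and the reduced HJB $\partial_t J + \tfrac{\sigma^2}{2}\partial_{\chi\chi}J + B\,\partial_\chi J = 0$, combining with the martingality PDE, and cancelling $V=\partial_\chi\ln|J|$ yields the scalar identity $\gamma\sigma^2\,\partial_\chi P = \partial_\chi B$, i.e. $B = \gamma\sigma^2 P + f(t)$, after which the time-shift $f$ is normalized away. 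Without invoking the trader's FOC your argument proves nothing about $B$.

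A secondary (but fixable) concern is your appeal to uniqueness of smooth solutions of the backward Burgers equation for $\partial_\chi u$; the paper sidesteps this by working directly with $u$ and its quadratic nonlinearity (which, in the Gaussian case, linearizes by Cole--Hopf).
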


\begin{rmk}
    (i) Although the new measure $\Q$ depends on the strategy $X$ through $\chi$, the optimal utility \eqref{eq:optimal_utility} is independent of $X$. This point is crucial to be able to use this proposition for an optimality criterion.  
    
   (ii) Since not all vector fields in multi-dimensions are gradient vector fields, the Proposition requires an additional condition if one wants to study this problem in the multi-dimensional case: the terminal pricing function $\chi\mapsto P(T,\chi)$ is a gradient vector field.

    (iii) This Proposition justifies the introduction of $\chi^*$ in \eqref{eq:chi}, where 
    $$dY_t=\sigma^2 (\rb_t-e_1)^\top \Sigma_t^{-1}((Z_t^\Q-\beta, \Xi )^\top-\mu^\P_t)dt+dZ_t$$
    is the equilibrium total demand. 
\end{rmk}

\begin{proof}[Proof of Proposition \ref{p:opt1}]
    We split the proof of the Proposition in two parts. We first show the property of the pricing rule in \eqref{eq:pathdep}. Then, we show the optimality criterion \eqref{eq:target} and the optimal utility \eqref{eq:optimal_utility}.

\medskip

{\it Step 1: Proof of \eqref{eq:pathdep}}. 
Since $X^*$ is optimal, it is a fortiori optimal among all absolutely continuous strategies. Fix an absolutely continuous trading strategy $X$ admissible for $\fP^*$. Without loss of generality, we assume that it can be written as 
$dX_t= \theta_t dt$ for some $\Fc$-adapted process $\theta$ and 
\begin{align}\label{eq:evchi}
    d\chi_s=B(s,\chi_s)ds+dX_s+dZ_s=(B(s,\chi_s)+\theta_s)ds +\sigma dB_s \ .
\end{align}
Following the optimization problem from \eqref{eq:problem}, we assume that the informed trader starts at time $t$ from position $X_t+\beta=x$ in stock and $0$ in cash so that
the wealth of the informed trader can be written as
$$ \Wc:=(X_T+\beta)( \Xi -G) - \int_t^T  P(s,\chi_s) \theta_sds \ .
$$
And we introduce the optimal control problem
 \begin{align}
 \label{eq:def_J}
 J(t,\chi,x) = \ \sup_{\theta} e^{\gamma  \Xi  x} \E\left[-\exp\left(-\gamma\Wc \right) \ | \ \Fc_t, \chi_t = \chi, X_t+\beta=x\right] \ , 
\end{align}
where the state is $(\chi_t,X_t+\beta)$ and evolves according to \eqref{eq:evchi} and 
$$d(X_s+\beta)=\theta_sds$$
given control $\theta$. 
By the independence of $G$, we can write the utility as 
\begin{align*}
     &\E\left[-\exp\left(\frac{\gamma^2 \varepsilon (X_T+\beta)^2}{2}+ \gamma\int_t^T  (P(s,\chi_s)- \Xi ) \theta_sds \right) \ | \ \Fc_t, \chi_t = \chi, X_t+\beta=x\right]
\end{align*}
and we obtain the following dynamic programming equation for $J$
\begin{align*}
    &\pa_t J+\frac{\pa_{\chi\chi} J\sigma^2}{2}+\pa_\chi J B+\sup_{\theta\in \R}\{\theta (\pa_\chi J+\pa_xJ +\gamma(P- \Xi ))\}=0 \ . 
\end{align*}
The final condition of $J$ is
$J(T,\chi,x)=-e^{\frac{\gamma^2 \varepsilon x^2}{2}}$.

Since the control variable $\theta$ is unconstrained, the optimization on the variable $\theta$ is over all of $\R$. Necessarily, we obtain the pair of the equations 
\begin{align*}
&\pa_t J +\frac{\sigma^2}{2}\pa_{\chi\chi} J + B\pa_\chi J =0  \ , \\
&\pa_\chi J+\pa_xJ +\gamma(P- \Xi ) = 0 \ , 
\end{align*}
or, equivalently,
\begin{align}
\label{eq:lnj}
&  \pa_t \ln |J|
 + \frac{\sigma^2}{2} \pa_{\chi\chi} \ln|J|
 + \frac{\sigma^2}{2} \left( \pa_\chi \ln |J| \right)^2
 + B \pa_\chi \ln|J| = 0 \ , \\
&
\label{eq:lnj_2}
P= \Xi -\frac{ \partial_\chi \ln |J| + \partial_x \ln |J| }{\gamma} \ .
\end{align}
Thus, differentiating \eqref{eq:lnj} in each variable $\chi$ and $x$, while setting $V= \pa_\chi \ln |J|$ and $\tilde V= \pa_x \ln |J|$, we obtain
\begin{align*}
\pa_tV+\frac{\sigma^2}{2} V_{\chi\chi}+ V_\chi (B+\sigma^2V)+VB_\chi &=0 \ , \\
\pa_t\tilde V+\frac{\sigma^2}{2} \tilde V_{\chi\chi}+ \tilde V_\chi (B+\sigma^2V) &=0 \ .
\end{align*}
Thus, adding these two expressions and using \eqref{eq:lnj_2}, we have
\begin{align*}
\pa_t P+\frac{\sigma^2}{2} \pa_{\chi\chi} P + \pa_\chi P (B+\sigma^2V)-\frac{V \pa_\chi B}{\gamma} &=0 \ .
\end{align*}
The inconspicuousness condition 
$\E\left[\frac{dX^*_t}{dt}|\Fc_t^{X^*+Z}\right]=0$ implies that 
$\chi_t(X^*+Z)-\int_0^t B(s,\chi_s(X^*+Z))ds=X^*_t+Z_t$ is $\Fc^{X^*+Z}$-martingale.  
Invoking the martingality condition for $P$, we obtain the following important necessary condition on $B$ for the martingality of $\fP^*_t(X^*+Z)$
$$ \gamma\sigma^2 \pa_\chi P = \pa_\chi B \ , $$
which suggest that $B=\gamma\sigma^2 P$ up to the addition of a deterministic function of time $f = f(t)$. So, we have 
$$d\chi_t=dY_t+\gamma\sigma^2 (P(t,\chi_t)-f(t))dt \ .$$
Since it does not have informational value, we take $f=0$ and obtain
$$d\chi_t=dY_t+\gamma\sigma^2 P(t,\chi_t)dt \ .$$
%\todo{TODO: better justification of $f=0$? $f$ is deterministic and does not bring any information to the dynamics}

Injecting the value of $B$, the martingality condition for $P$ is 
$$ \pa_t P+\frac{\sigma^2}{2} \pa_{\chi\chi} P + \gamma\sigma^2 P \pa_\chi P=0 \ .
$$
We now integrate this equality in space. 
If $P(t,\chi)$ is the gradient of some function  $u(t,\chi)$, then this function has to solve
\begin{align*}
0 = & \  \pa_t u+\frac{\sigma^2}{2}\pa_{\chi\chi} u +\frac{1}{2}\gamma\sigma^2 (\pa_\chi u )^2\\
  = &  \ \pa_t u+\frac{\sigma^2}{2}\pa_{\chi\chi} u +\frac{1}{2}\gamma\sigma^2P^2
\end{align*}
which concludes the proof of \eqref{pde:chi1}-\eqref{eq:pathdep}.

\medskip

{\it Step 2: Joint law of  $ \Xi $ and $Z_T^\Q-\beta$. }
 The change of measure is 
\begin{align*}
       \left.\frac{d\Q}{d\P}\right|_{\Fc_t}
 = & \ \exp\left( -\gamma\int_0^t  P_s dZ_s-\gamma\int_0^t \frac{\gamma \sigma^2 P_s^2}{2}ds\right) \\
 = & \ \exp\left( -\gamma \left( u(t, \chi_t) - u(0,0) \right) + \gamma \int_0^t P(s,\chi_s) dX_s \right) \\
 = & \ \exp\left( -\gamma \left( u(t, \chi_t) - u(0,0) \right) + \gamma \int_0^t \partial_\chi u(s,\chi_s) dX_s \right) \ .
\end{align*}
%{\color{red} DEf 2.2 needs to be modified so that Girsanov is legit here}
Thanks to Girsanov, we have indeed a $\sigma^2$-Brownian motion in the form of
\begin{align*}
Z_t^\Q := &  \ Z_t+\gamma\sigma^2 \int_0^tP(s,\chi_s)ds \ .
\end{align*} 

Now, let us prove Eq. \eqref{eq:eqlaw}. Indeed, since $P = \left( P_t \ ; \ t \geq 0 \right)$ is $\Fc^Y$-adapted, for any bounded continuous functions $f$ and $h$, we have
$$\E^\Q[f(Z_T^\Q)h( \Xi ,\beta)]=\E\left[\E[e^{-\gamma\int_0^T  P_s dZ_s-\gamma\int_0^T \frac{\gamma\sigma^2 P_s^2}{2}ds}f(Z_T^\Q)|\Fc_0]h( \Xi ,\beta)\right] \ .$$
By the properties of conditioning, 
$$\E[e^{-\gamma\int_0^T  P_s dZ_s-\gamma\int_0^T \frac{\gamma\sigma^2 P_s^2}{2}ds}f(Z_T^\Q)|\Fc_0]=\E[e^{-\gamma\int_0^T  P_s dZ_s-\gamma\int_0^T \frac{\gamma\sigma^2 P_s^2}{2}ds}|\Fc_0] \ \E^\Q[f(Z_T^\Q)|\Fc_0] \ .$$
Since $Z$ is a $(\Fc,\P)$-Brownian motion, $\E[e^{-\gamma\int_0^T  P_s dZ_s-\gamma\int_0^T \frac{\gamma\sigma^2 P_s^2}{2}ds}|\Fc_0]=1$. The fact that $Z^\Q$ is a $(\Fc,\Q)$-Brownian motion leads to
$$\E^\Q[f(Z_T^\Q)|\Fc_0]=\E^\Q[f(Z_T^\Q)]$$
which is a deterministic integral against Gaussian density. 
Thus, 
$$\E^\Q[f(Z_T^\Q)h( \Xi ,\beta)]=\E^\Q[f(Z_T^\Q)] \ \E[h( \Xi ,\beta)]$$
which concludes the proof of Eq. \eqref{eq:eqlaw}. 

\medskip

{\it Step 3: Proof of optimality criteria \eqref{eq:target}-\eqref{eq:optimal_utility}}.
Since $u$ is assumed to be smooth, by the Itô's formula, we have
\begin{align*}
    du(t,\chi_t)
= & \ \left( \pa_t u+\frac{\sigma^2}{2}\pa_{\chi\chi} u \right) dt
    + \pa_\chi u d\chi_t\notag\\
= & \ P(t,\chi_t)d\chi_t-\frac{1}{2}\gamma\sigma^2P^2(t,\chi_t)dt\notag\\
= & \ P(t,\chi_t)dX_t+P(t,\chi_t)dZ_t+\frac{1}{2}\gamma\sigma^2P^2(t,\chi_t)dt \ . %\label{eq:gamma_ito}
\end{align*}
The change of measure is
$$\frac{d\Q}{d\P}=\exp\left( -\gamma\int_0^T  P_s dZ_s-\gamma\int_0^T \frac{\gamma \sigma^2 P_s^2}{2}ds\right)= \exp\left( -\gamma \left( u(T, \chi_T) - u(0,0) \right) + \gamma \int_0^T P_s dX_s \right)$$
so that the value function satisfies
 \begin{align}
  & e^{-\gamma  \Xi  \beta}J(0,0,\beta)\notag\\
= & \sup_{(X)}  \E\left[-e^{\gamma (g (X_T+\beta)- \Xi  (X_T+\beta))+\gamma\int_0^T  P_t dX_t}|\Fc_0\right]\notag\\
= & e^{-\gamma u(0,0)}  
    \sup_{(X)}\E\left[-e^{\gamma (g (X_T+\beta)- \Xi (X_T+\beta)+u(T,\chi_T))-\gamma\int_0^T  P_s dZ_s-\gamma\int_0^T \frac{\gamma \sigma^2 P_s^2}{2}ds}|\Fc_0\right]\notag\\
= & e^{-\gamma u(0,0)}  
    \sup_{(X)}\E\left[-e^{\gamma (g (X_T+\beta)- \Xi (X_T+\beta)+u(T,\chi_T)) }
                       \frac{d\Q}{d\P} \ | \ \Fc_0\right]\notag \\
= & e^{-\gamma u(0,0)} 
    \sup_{(X)}\E^\Q\left[-e^{\gamma (g (X_T+\beta)- \Xi  (X_T+\beta)+u(T,\chi_T)) }\ | \ \Fc_0\right] \ .\label{eq:uti2}
\end{align}
By the definition of $\chi$, we have 
\begin{align*}
   X_T+\beta=\chi_T-\int_0^T \gamma\sigma^2 P(s,\chi_s)ds-Z_T+\beta=\chi_T-Z_T^\Q+\beta
\end{align*}
and 
\begin{align*}
    \E^\Q\left[-e^{\gamma (g (X_T+\beta)- \Xi  (X_T+\beta)+u(T,\chi_T)) }\ | \ \Fc_0\right]&= \E^\Q\left[-e^{\gamma (g (\chi_T-Z_T^\Q+\beta)- \Xi  (\chi_T-Z_T^\Q+\beta)+u(T,\chi_T)) }\ | \ \Fc_0\right]\\
    &= \E^\Q\left[-e^{-\gamma (\Sc(\chi_T-(Z_T^\Q-\beta), \Xi )-u(T,\chi_T)) }\ | \ \Fc_0\right]\\
    &\leq \E^\Q\left[-e^{-\gamma (\sup_x\Sc(x-(Z_T^\Q-\beta), \Xi )-u(T,x)) }\ | \ \Fc_0\right]\\
    &\leq \E\left[-e^{-\gamma \Gamma^c(Z_T-\beta, \Xi ) }\ | \ \Fc_0\right]
\end{align*}
where we used \eqref{eq:eqlaw} at the last line. Thus, we obtain the upper bound \eqref{eq:optimal_utility}. For any strategy satisfying \eqref{eq:target}, all inequalities above are equalities and we obtain the optimality criterion.

\end{proof}

% --------------------------------------------------------------------

% --------------------------------------------------------------------
\section{Optimal transport}
\label{sect:StructureOptimizer}

%{\color{red} In the subsequent section we always considered transport maps of $ \Xi -m_\Xi $ and $Z_T^\Q-\beta+m_\beta$. Thus, at the end $m=0$. So take $m=0$ starting here.}
The Proposition \ref{p:opt1} and Eq.\eqref{eq:target} show that the fundamental quantity to pinpoint in equilibrium is $\eta := \Lc^\Q( \chi_T)= \Nc(m, v^2)$. Indeed, the distribution of $(Z_T^\Q - \beta, \Xi )$ under $\Q$ and and the surplus function $\Sc$ are known. If we also know $\eta = \Lc^\Q( \chi_T)= \Nc(m, v^2)$, $\Gamma$ can be constructed by solving an optimal transport problem described below. We refer to \cite{villani,santambrogio2015optimal} for their proofs.

\begin{proposition}\label{prop:OT}
For any probability distribution $\eta$ on $\R$, denote $\Gamma(\chi;\eta)$ and $\Gamma^c(z,\xi;\eta)$ the Kantorovich potentials for the transport problem
\begin{align}\label{eq:optt}
\sup_{\pi \in \Pi(\mu\otimes \nu, \eta)}\iiint\Sc (\chi-z, \xi)d\chi dzd\xi \ , 
\end{align}
where the surplus function is defined is \eqref{eq:surplus}, and under $\pi$, the $\chi$ marginal is distributed according to $\eta$ and the law of $(z, \xi)$ is $\mu\otimes \nu$. 
By definition, these functions satisfy the duality relations
\begin{align*}%\label{eq:dualinequality}
    \Gamma(\chi;\eta)+\Gamma^c(z,\xi;\eta)\geq \Sc (\chi-z, \xi)
\end{align*}
for all $(\chi,z,\xi)$ and there exists a mapping $I(z,\xi;\eta)$ so that  
$I\left( \cdot \ ;\eta \right)_\sharp (\mu\otimes \nu) = \eta$
and 
\begin{align*}%\label{eq:dualequality}
    \Gamma(I(z,\xi;\eta);\eta)+ \Gamma^c(z,\xi;\eta)= \Sc (I(z,\xi;\eta)-z, \xi)
\end{align*}
for $\mu\otimes \nu$ almost every $(z,\xi).$
Note that a consequence of this equality is the fact that for $\mu\otimes \nu$ almost every $(z,\xi)$,
\begin{align}\label{eq:defI}
I(z,\xi;\eta)\in \argmax\{\Sc (x-z, \xi)-\Gamma(x;\eta)\} \ .
\end{align}

We denote
\begin{align}\label{eq:defoptcoupling}
    \pi^*(d\chi,dz,d\xi;\eta):=((I(\cdot;\eta),\id)_\sharp(\mu\otimes \nu) )(d\chi,dz,d\xi)
\end{align}
and 
\begin{align*}
   \chi\mapsto \pi^*_\chi(dz,d\xi;\eta)
\end{align*}
the disintegration of this measure on the $\chi$ marginal. This means that 
$$ \pi^*(d\chi,dz,d\xi;\eta)=\pi^*_\chi(dz,d\xi;\eta)\eta(d\chi) \ .$$
\end{proposition}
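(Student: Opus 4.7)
The plan is to reduce the problem to a one-dimensional bilinear optimal transport via an algebraic rearrangement of $\Sc$, and then cite standard results from Villani and Santambrogio. Expand
$$\Sc(\chi-z,\xi) = \xi(\chi-z) - \tfrac{\varepsilon\gamma}{2}(\chi-z)^2 = (\xi+\varepsilon\gamma z)\chi - \tfrac{\varepsilon\gamma}{2}\chi^2 + h(z,\xi),$$
where $h(z,\xi) := -\xi z - \tfrac{\varepsilon\gamma}{2}z^2$ depends only on the source variables. Since $h$ and the $\chi$-only term $-\tfrac{\varepsilon\gamma}{2}\chi^2$ integrate against the fixed marginals $\mu\otimes\nu$ and $\eta$, the value $\int\Sc\, d\pi$ differs from $\int(\xi+\varepsilon\gamma z)\chi\, d\pi$ only by a constant independent of $\pi\in\Pi(\mu\otimes\nu,\eta)$. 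Setting $\tilde z := \xi+\varepsilon\gamma z$ and $\tilde\rho := (z,\xi)\mapsto\tilde z)_\sharp(\mu\otimes\nu)$, which is a centered Gaussian, the problem \eqref{eq:optt} reduces to the 1D bilinear transport
$$\sup_{\tilde\pi\in\Pi(\tilde\rho,\eta)}\int \tilde z\,\chi\, d\tilde\pi(\tilde z,\chi).$$

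Next I would invoke the classical Brenier–Rüschendorf theorem in one dimension (e.g.\ Santambrogio, Theorem 2.9; Villani, Theorem 2.12): for the bilinear surplus $\tilde z\chi$ on $\R\times\R$, provided both marginals have finite second moments and $\tilde\rho$ is atomless, the unique optimal plan is the comonotone coupling and is concentrated on the graph of the monotone rearrangement
$$\tilde I(\tilde z) := F_\eta^{-1}\!\bigl(F_{\tilde\rho}(\tilde z)\bigr).$$
The same theorem yields Kantorovich potentials $\tilde\Gamma$ and $\tilde\Gamma^c$ satisfying $\tilde\Gamma(\chi)+\tilde\Gamma^c(\tilde z)\geq \tilde z\chi$ with equality on $\{(\tilde z,\tilde I(\tilde z))\}$. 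I would then transfer all objects back to the original coordinates by defining
\begin{align*}
I(z,\xi;\eta) &:= \tilde I(\xi+\varepsilon\gamma z),\\
\Gamma(\chi;\eta) &:= \tilde\Gamma(\chi) - \tfrac{\varepsilon\gamma}{2}\chi^2,\\
\Gamma^c(z,\xi;\eta) &:= \tilde\Gamma^c(\xi+\varepsilon\gamma z) + h(z,\xi).
\end{align*}
The duality inequality and the equality on the support of the optimal plan follow mechanically; the push-forward identity $I(\cdot;\eta)_\sharp(\mu\otimes\nu)=\eta$ is inherited from $\tilde I_\sharp\tilde\rho=\eta$; and \eqref{eq:defI} follows by rewriting the first order condition for the $\sup$ in the definition of $\tilde\Gamma^c$ in the original coordinates.

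There is no genuine obstacle in this argument: the surplus is continuous with at most quadratic growth, the source $\mu\otimes\nu$ is Gaussian hence has all moments, and we will only ever apply the statement with $\eta=\Nc(m,v^2)$, so the finite-second-moment and atomless hypotheses of the 1D Brenier theorem are trivially satisfied. The only point of mild care is the rigorous bookkeeping of the additive constants when passing from $(\tilde\Gamma,\tilde\Gamma^c)$ back to $(\Gamma,\Gamma^c)$, but this is routine. A pleasant byproduct of this reduction is that when $\eta$ is Gaussian both $\tilde\rho$ and $\eta$ are centered Gaussians, so $\tilde I$ is affine and all formulas for $I$, $\Gamma$, and $\Gamma^c$ become explicit — which is exactly what is exploited downstream to obtain the closed form expressions for $\Lambda$, $p_t$ and the other equilibrium quantities.
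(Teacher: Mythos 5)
Your argument is correct, and it is worth pointing out that the paper itself does not actually prove Proposition \ref{prop:OT}: the text preceding it explicitly defers to Villani and Santambrogio, treating the whole statement as a citation of standard Kantorovich duality plus the Brenier/R\"uschendorf existence theory. So there is no ``paper's proof'' to compare against. That said, your reduction is exactly the move the paper makes a page later in its proof of Lemma \ref{lemma:gaussian_transportfinalcondition}: there the authors expand $\Sc(\chi-z,\xi)$, peel off the $\chi$-only and $(z,\xi)$-only terms which integrate to constants against the fixed marginals, and are left with the optimal coupling of $\chi$ against $\xi+\vare\gamma z$. Your $\tilde z$ substitution, the pair $(\tilde\Gamma,\tilde\Gamma^c)$, and the transfer $\Gamma(\chi;\eta)=\tilde\Gamma(\chi)-\tfrac{\vare\gamma}{2}\chi^2$, $\Gamma^c(z,\xi;\eta)=\tilde\Gamma^c(\xi+\vare\gamma z)+h(z,\xi)$ are precisely the coordinates in which the paper's explicit Gaussian formulas for $\Gamma$, $\Gamma^c$, and $I$ become the $\vare$-Young inequality computation. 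Your push-forward and argmax deductions are clean.

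One caveat on scope. Proposition \ref{prop:OT} is stated ``for any probability distribution $\eta$ on $\R$,'' but for arbitrary $\eta$ the supremum in \eqref{eq:optt} can equal $-\infty$ (if $\eta$ has infinite second moment, then $-\tfrac{\vare\gamma}{2}\int\chi^2\,d\eta=-\infty$ dominates), in which case Kantorovich potentials and an optimal plan need not be well-defined. Your proof requires $\eta$ to have a finite second moment (atomlessness of the source is automatic since $\tilde\rho$ is Gaussian), and you flag this honestly. This is not a defect in your reasoning but a reflection of the proposition being phrased somewhat informally; since the paper only ever invokes it with $\eta=\Nc(m,v^2)$, nothing downstream is affected. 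If you wanted to match the stated generality you would need to add the finite-second-moment hypothesis, or, more ambitiously, observe that because $\Sc(\cdot-z,\xi)$ is strictly concave with a maximizer at $z+\xi/(\vare\gamma)$, any admissible plan that achieves a value $>-\infty$ already forces finite second moments on $\eta$, so the hypothesis is in fact redundant whenever the problem is non-degenerate.
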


Eq. \eqref{eq:defI} (and therefore \eqref{eq:target}) is in fact the main relation we use from optimal transport theory. This result says that if we take $\Gamma$ in \eqref{eq:target} to be the Kantorovich potential $\Gamma(\cdot;\eta)$, then the law of $\chi_T=I(Z_T^\Q - \beta, \Xi )$ under $\Q$  will indeed be $\eta$. This provides a way to pinpoint the solution to \eqref{pde:chi1} by finding its terminal condition as a Kantorovich potential if we know $\eta$.

If the measures $\mu$ and $\eta$ are Gaussian, we can explicitly compute these quantities as functions of the mean and the variance of $\eta$. 

\begin{lemma}
\label{lemma:gaussian_transportfinalcondition}
Assume that $\eta$ is normal with mean $m$ and variance $v^2>0$ and denote $\Lambda:=\frac{\sigma_{e}}{v}-\vare\gamma$. For the transport problem \eqref{eq:optt} with Gaussian marginals 
\begin{align}\label{Assump:Gaussian}
 \mu \otimes \nu 
 & = \mathcal{L}^\Q (Z_T^\Q-\beta,   \Xi  )
 = \Nc\left(-m_\beta,\sigma^2 T + \sigma^2_\beta \right) \otimes \Nc\left(m_\Xi ,\sigma^2_\Xi\right)  \ , \\
    \eta &= \mathcal{L}^{\mathbb{Q}}(\chi_T) 
        = \mathcal{N}(m, v^2) \ ,  \nonumber
\end{align}
the Kantorovich potentials are
\begin{align}
\label{eq:gausstransport}
\Gamma(\chi;\eta)&=\frac{1}{2}\Lambda(\chi-m)^2+(m_\Xi -\vare\gamma (m_\beta+m))(\chi-m) - \frac{\vare\gamma m^2}{2}  
\\
    \Gamma^c(z,\xi;\eta)&=\frac{1}{2}\left( - 2\xi z -{\vare\gamma}z^2+\frac{v}{\sigma_{e}}(\xi-m_\Xi  +{\varepsilon \gamma} (z+m_\beta))^2+2m (\xi +{\varepsilon \gamma} z)\right) \ \notag\end{align}
and the optimal transport maps are
\begin{equation} \label{eq:OTmapGaussian}
    I(z,\xi;\eta)=m+\frac{v}{\sigma_{e}}((\xi-m_\Xi ) +{\varepsilon \gamma} (z+m_\beta)) \ .
\end{equation}

If the joint law of $\left(\chi_T,  Z^\Q_T-\beta,  \Xi  \right) $  is $ \pi^*$, then under this coupling we have the equality in law
$$ \left( \begin{pmatrix} Z^\Q_T-\beta\\  \Xi \end{pmatrix} \ | \ \chi_T \right)
   \
   \eqlaw 
   \
   \frac{\chi_T-m}{\sigma_{e} v} 
   \ub
   +
   \begin{pmatrix}-m_\beta\\m_\Xi \end{pmatrix}
   +
   \Nc 
   \frac{\sigma_\Xi \sqrt{\sigma^2 T + \sigma_\beta^2} }{\sigma_{e}} 
   \vb
   \ 
$$
where $\Nc$ is an independent standard normal random variable. 
\end{lemma}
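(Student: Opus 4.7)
My plan is to reduce the two-dimensional optimal transport problem \eqref{eq:optt} to a one-dimensional monotone rearrangement by linearizing the surplus, solve that scalar problem in closed form thanks to the Gaussian assumption, recover the Kantorovich potentials via the first-order condition and duality, and finally read off the conditional distribution of $(Z_T^\Q-\beta,\Xi)$ given $\chi_T$ from standard Gaussian conditioning on a linear functional.

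First I would expand the surplus as
$$\Sc(\chi - z, \xi) = \chi\bigl(\xi + \vare\gamma z\bigr) - \tfrac{\vare\gamma}{2}\chi^2 - \xi z - \tfrac{\vare\gamma}{2}z^2,$$
and notice that under any coupling $\pi \in \Pi(\mu\otimes\nu,\eta)$ the last three terms integrate to constants depending only on the marginals. Setting $W := \xi + \vare\gamma z = \wb^\top(z,\xi)^\top$, the problem is thus equivalent, up to an additive constant, to maximizing $\E^\pi[\chi W]$. Under $\mu\otimes\nu$, $W$ is Gaussian with mean $m_\Xi - \vare\gamma m_\beta$ and variance $\wb^\top\diag(\sigma^2T+\sigma_\beta^2,\sigma_\Xi^2)\wb = \sigma_e^2$, while $\chi\sim\Nc(m,v^2)$. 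The Hoeffding--Fréchet bound then yields the unique (comonotone) optimal coupling
$$\chi = m + \tfrac{v}{\sigma_e}\bigl(W - (m_\Xi - \vare\gamma m_\beta)\bigr),$$
which is exactly \eqref{eq:OTmapGaussian}.

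To recover the potentials, I would use the envelope characterization \eqref{eq:defI}: the first-order condition at the optimizer reads $\partial_\chi\Gamma(\chi;\eta) = \partial_\chi\Sc(\chi-z,\xi) = \xi + \vare\gamma z - \vare\gamma\chi$ evaluated at $\chi = I(z,\xi;\eta)$. Substituting the inversion $\xi + \vare\gamma z = (\sigma_e/v)(\chi - m) + m_\Xi - \vare\gamma m_\beta$ yields the linear expression
$$\partial_\chi\Gamma(\chi;\eta) = \Lambda(\chi - m) + m_\Xi - \vare\gamma(m_\beta + m),$$
which integrates to the quadratic $\Gamma(\chi;\eta)$ displayed in \eqref{eq:gausstransport}. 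The formula for $\Gamma^c$ then follows from the duality identity $\Gamma^c(z,\xi;\eta) = \Sc(I(z,\xi;\eta) - z,\xi) - \Gamma(I(z,\xi;\eta);\eta)$ by direct substitution and completion of squares; the additive constant $-\vare\gamma m^2/2$ in $\Gamma$ is pinned down precisely so that this identity gives the asserted expression.

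For the conditional law, I observe that under $\pi^*$ the map $\chi_T = I(Z_T^\Q-\beta,\Xi;\eta)$ is an affine function of $W$, so conditioning on $\chi_T$ is the same as conditioning on $W$. Standard Gaussian regression then gives
$$\E\bigl[(Z_T^\Q-\beta,\Xi)^\top \mid W\bigr] = (-m_\beta,m_\Xi)^\top + \frac{\Sigma_0\wb}{\sigma_e^2}\bigl(W - (m_\Xi - \vare\gamma m_\beta)\bigr),$$
where $\Sigma_0:=\diag(\sigma^2T+\sigma_\beta^2,\sigma_\Xi^2)$ and $\Sigma_0\wb = \ub$; inverting the affine map of Step~2 rewrites this as $(-m_\beta,m_\Xi)^\top + \frac{\chi_T - m}{\sigma_e v}\ub$, which is the deterministic part of the stated conditional law. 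The conditional covariance $\Sigma_0 - \ub\ub^\top/\sigma_e^2$ is rank one with kernel containing $\wb$; since $\vb^\top\wb = 0$, it must equal $c\,\vb\vb^\top$ for some $c>0$, and reading off one diagonal entry gives $c = \sigma_\Xi^2(\sigma^2T+\sigma_\beta^2)/\sigma_e^2$. Hence the residual is distributed as $\tfrac{\sigma_\Xi\sqrt{\sigma^2T+\sigma_\beta^2}}{\sigma_e}\vb\cdot\Nc$ with $\Nc$ an independent standard normal, as claimed. The only delicate step is the bookkeeping of the additive constants in the potentials, but this is routine once the first-order condition is in hand.
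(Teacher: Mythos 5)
Your argument is correct and follows the same overall structure as the paper: expand the surplus so that only the cross term $\chi\,(\xi+\vare\gamma z)$ carries coupling information, identify the reduced problem as one-dimensional Gaussian optimal transport between $\Nc(m,v^2)$ and the law of $W=\wb^\top(z,\xi)^\top\sim\Nc(m_\Xi-\vare\gamma m_\beta,\sigma_e^2)$, read off the comonotone map, and finish with Gaussian conditioning. The details differ in two places, both in a streamlining direction. For the potentials, the paper simply displays the pair $(\Gamma,\Gamma^c)$ and notes they satisfy the duality inequality by Young's inequality, whereas you \emph{derive} $\partial_\chi\Gamma$ from the first-order optimality condition $\partial_\chi\Gamma(I)=\partial_\chi\Sc(I-z,\xi)$ together with the inversion of the transport map, then recover $\Gamma^c$ from the tightness identity; this is more constructive but, as you note, leaves the additive normalization of $\Gamma$ a free convention (the paper's $-\vare\gamma m^2/2$ is just the choice making $\Gamma^c$ take the stated form, consistent with your remark). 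For the conditional law, the paper assembles the degenerate $3\times 3$ covariance $\Sigma_{ext}$ of $(Z_T^\Q-\beta,\Xi,\chi_T)$ and applies the Schur complement, while you observe that conditioning on $\chi_T$ is the same as conditioning on the scalar $W$ because $\chi_T$ is an invertible affine image of $W$ under $\pi^*$, and then use $\Cov((z,\xi)^\top,W)=\Sigma_0\wb=\ub$; this avoids the $3\times3$ bookkeeping. Your rank-one argument for the conditional covariance ($\Sigma_0-\ub\ub^\top/\sigma_e^2$ kills $\wb$, hence is proportional to $\vb\vb^\top$, scale fixed by a diagonal entry) is also a clean alternative to the paper's entrywise simplification. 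All the algebra checks out.

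\end{document}
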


\begin{proof}
The optimal transport problem in hand is
\begin{align*}%\label{eq:optt2}
&\sup_{\pi \in \Pi(\mu\otimes \nu, \eta)}\iiint\Sc (\chi-z, \xi)\pi(d\chi, dz,d\xi)\\
=&\sup_{\pi \in \Pi(\mu\otimes \nu, \eta)}\iiint \big( \xi (\chi-z) - \frac{\varepsilon \gamma}{2} \left(\chi-z\right)^2 \big) \ \pi(d\chi, dz,d\xi)\\
=&\iint \big( -z\xi  - \frac{\varepsilon \gamma}{2} z^2+m(\xi +{\varepsilon \gamma} z) \big) \ (\mu\otimes \nu)(dz,d\xi)\\
& \quad +\int \big( (m_\Xi -\vare\gamma m_\beta)(\chi-m) -\frac{\varepsilon \gamma}{2}\chi^2 \big) \ \eta(d\chi)\\
& \quad + \sup_{\pi \in \Pi(\mu\otimes \nu, \eta)}\iiint  (\chi-m)(\xi-m_\Xi  +{\varepsilon \gamma}(z+m_\beta)) 
\ \pi(d\chi, dz,d\xi)\notag \ . 
\end{align*}
%Recalling the definition of $\sigma_{e}$ in Eq. \eqref{def:lambda},
The last term on the right is the optimal coupling between 
$$\Nc(m,v^2)\mbox{ and }\Nc(m_\Xi -\gamma \vare m_\beta,\sigma_{e}^2 ) \ .$$
The potentials are \begin{align*}
   \Gamma(\chi;\Nc(m,v^2))&=\frac{1}{2}\Lambda(\chi-m)^2+(m_\Xi -\vare\gamma (m_\beta+m))(\chi-m) - \frac{\vare\gamma m^2}{2} \\
    \Gamma^c(z,\xi;\Nc(m,v^2))&=\frac{1}{2}\left( - 2\xi z -{\vare\gamma}z^2+\frac{v}{\sigma_{e}}(\xi-m_\Xi  +{\varepsilon \gamma} (z+m_\beta))^2+2m (\xi +{\varepsilon \gamma} z)\right) \ .
\end{align*}
This is easily verified using the $\vare$-Young inequality. Because the optimal transport is linear for Gaussians, the optimizer is 
$$I(z,\xi;\Nc(m,v^2))=m+\frac{v}{\sigma_{e}}((\xi-m_\Xi ) +{\varepsilon \gamma} (z+m_\beta)) \ .$$

It remains to compute the conditional distribution $\pi^*_{\chi}(dz,d\xi;\Nc(m,v^2))$. Thanks to \eqref{eq:defoptcoupling}, we know that under $\pi^*$, 
$$ \chi_T = \ I(Z^\Q_T-\beta, \Xi ;\Nc(m,v^2))
          = \ m+\frac{v}{\sigma_{e}}( \Xi -m_\Xi  +{\varepsilon \gamma} (Z^\Q_T-\beta+m_\beta)) 
          \eqlaw \ \Nc(m, v^2) \ .$$
As such, the vector $\left( Z^\Q_T-\beta, \Xi ,\chi_T \right)^\top$ is a Gaussian vector with mean $(-m_\beta, m_\Xi , m)^\top$ and (degenerate) covariance matrix 
$$
\Sigma_{ext}
=
\begin{bmatrix}
\sigma^2 T +\sigma_\beta^2&     0    & \frac{v\varepsilon \gamma}{\sigma_{e}} (\sigma^2 T +\sigma_\beta^2)\\
0  &       \sigma_\Xi^2 &  \frac{v\sigma_\Xi^2}{\sigma_{e}} \\
\frac{v\varepsilon \gamma}{\sigma_{e}} (\sigma^2 T +\sigma_\beta^2)&\frac{v\sigma_\Xi^2}{\sigma_{e}} & v^2 
\end{bmatrix}
\ .
$$

The Schur complement formula states that $\left( (Z^\Q_T-\beta, \Xi )^\top \ | \  \chi_T \right)$ is Gaussian with explicit mean and covariance using block operations. More precisely, the mean is
\begin{align*}
    \E^\Q\left[ (Z^\Q_T-\beta, \Xi )^\top \ | \ \chi_T \right]
= & \ \E^\Q\left[ (Z^\Q_T-\beta, \Xi )^\top \right]
    + 
      \frac{1}{\sigma_{e} v} \left( \chi_T - m \right) \ub \\
= & \ \frac{\chi_T - m}{ \sigma_{e} v}
      \ub +\begin{pmatrix}-m_\beta\\m_\Xi \end{pmatrix}\ ,
\end{align*}
and the covariance is
\begin{align*}
    \Cov\left( (Z^\Q_T-\beta, \Xi )^\top \ | \ \chi_T \right)
= & \ \begin{bmatrix}
\sigma^2 T +\sigma_\beta^2 &         0    \\
0  &       \sigma_\Xi^2  \\
\end{bmatrix}
    - \frac{1}{\sigma_{e}^2 }
      \ub
      \ub^\top \ .\end{align*}
 Recalling that $\sigma_{e}^2=\sigma^2_\Xi+\vare^2\gamma^2( \sigma^2 T+\sigma^2_\beta)$, we see that
\begin{align*}
  & \Cov\left( (Z^\Q_T-\beta, \Xi )^\top \ | \ \chi_T \right) \\
= & \begin{bmatrix}
\sigma^2 T +\sigma_\beta^2& 0 \\ 0 & \sigma_\Xi^2
\end{bmatrix}
 - \frac{1}{\sigma_{e}^2 }
   \left( \varepsilon \gamma(\sigma^2 T +\sigma_\beta^2), \sigma_\Xi^2 \right)^\top
   \left( \varepsilon \gamma(\sigma^2 T +\sigma_\beta^2), \sigma_\Xi^2 \right)\\
= & \frac{1}{\sigma_{e}^2}
 \begin{bmatrix}
 (\sigma^2_\Xi+\vare^2\gamma^2(\sigma^2 T +\sigma_\beta^2)) (\sigma^2 T +\sigma_\beta^2) & 0 \\
 0 & (\sigma^2_\Xi+\vare^2\gamma^2 (\sigma^2 T +\sigma_\beta^2)) \sigma_\Xi^2
 \end{bmatrix}\\
 &-\frac{1}{\sigma_{e}^2}
 \begin{bmatrix}
 (\varepsilon \gamma (\sigma^2 T +\sigma_\beta^2))^2 & \varepsilon \gamma (\sigma^2 T +\sigma_\beta^2) \sigma_\Xi^2 \\
 \varepsilon \gamma (\sigma^2 T +\sigma_\beta^2) \sigma_\Xi^2 & (\sigma_\Xi^2)^2
 \end{bmatrix} 
 \\
= & \frac{1}{\sigma_{e}^2 }
 \begin{bmatrix}
 \sigma^2_\Xi (\sigma^2 T +\sigma_\beta^2) & -\varepsilon \gamma (\sigma^2 T +\sigma_\beta^2) \sigma_\Xi^2 \\
     -\varepsilon \gamma (\sigma^2 T +\sigma_\beta^2) \sigma_\Xi^2 & \vare^2\gamma^2(\sigma^2 T +\sigma_\beta^2)\sigma_\Xi^2
     \end{bmatrix}\\
 = & \frac{(\sigma^2 T +\sigma_\beta^2)\sigma^2_\Xi}{\sigma_{e}^2 }
     \begin{bmatrix}
     1 & -\varepsilon \gamma \\
     -\varepsilon \gamma & \vare^2\gamma^2
     \end{bmatrix}= \frac{(\sigma^2 T +\sigma_\beta^2)\sigma^2_\Xi}{\sigma_{e}^2 }
     \vb\vb^\top
 \end{align*}
which yields the result.

\end{proof}

\begin{rmk}
    %Unless otherwise specified, we use the vectors $\ub$ and $\vb$ as in \eqref{def:uv} in the rest of this paper. Additionally, 
    In the remainder of the section, we take $\Gamma(x)=\Gamma(x;\Nc(m,v^2))$, $I(z,\xi)=I(z,\xi;\Nc(m,v^2))$ and drop the dependence of these functions on $m,v.$
\end{rmk}

% --------------------------------------------------------------------
The next Lemma states that if $\eta=\Nc(m,v^2)$ is given,  Proposition \ref{p:opt1} fully determines the pricing rule of the market maker. 
\begin{lemma}
\label{lemma:gaussian_solvability_price}
Assuming $m,v$ are given, the maps $u(t,\chi)$ and $P(t,\chi)$ are
$$ u(t,\chi)
   =
   \frac{p_t}{2}\chi^2 + q_t \chi + s_t \ ,
   \quad \quad
   P(t,\chi)= p_t\chi  + q_t \ , $$
where
{
\begin{align*}%\label{eq:pq}
p_t & :=\frac{1}{\Lambda^{-1}-{\gamma\sigma^2}(T-t)} \ , \quad 
q_t  := \frac{m_\Xi -\vare\gamma m_\beta - \frac{\sigma_e}{v}m}{1-\Lambda{\gamma\sigma^2}(T-t)} \ , \\
s_t & := \frac{1}{2\gamma} \log \left( 1-\Lambda{\gamma\sigma^2}(T-t)\right) 
			 - \frac{(m_\Xi -\vare\gamma m_\beta - \frac{\sigma_e}{v}m)^2}{ 2\Lambda} \left(
								\frac{1}{1 -  \Lambda{\gamma\sigma^2}(T-t)} -1
								\right)\\
         & \quad \quad
         +\frac{\sigma_e m^2}{2v} - m(m_\Xi -\vare\gamma m_\beta)
\ .
\end{align*}
}
In particular, for any inconspicuous strategy, 
$$ d\chi_t = dY_t + \gamma \sigma^2 P(t,\chi_t) dt \ $$
is an inhomogeneous Ornstein--Uhlenbeck process under $\P$ and the price satisfies the dynamics 
$$dP(t,\chi_t)=\frac{dY_t}{\Lambda^{-1}-{\gamma\sigma^2}(T-t)} \ .$$
\end{lemma}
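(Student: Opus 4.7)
My plan is to exploit the quadratic structure of the terminal datum $u(T,\chi) = \Gamma(\chi; \Nc(m,v^2))$ supplied by Lemma \ref{lemma:gaussian_transportfinalcondition}: since \eqref{pde:chi1} is a quadratic Hamilton--Jacobi equation that preserves polynomial degree in $\chi$, I would try the ansatz
$$u(t,\chi) = \frac{p_t}{2}\chi^2 + q_t\chi + s_t \ .$$
Plugging into \eqref{pde:chi1} and matching coefficients of $\chi^2$, $\chi^1$ and $\chi^0$ separately decouples the PDE into the triangular ODE system
\begin{align*}
p'_t &= -\gamma\sigma^2 p_t^2 \ , \\
q'_t &= -\gamma\sigma^2 p_t q_t \ , \\
s'_t &= -\frac{\sigma^2}{2} p_t - \frac{\gamma\sigma^2}{2} q_t^2 \ ,
\end{align*}
whose terminal conditions $p_T = \Lambda$, $q_T = m_\Xi - \vare\gamma m_\beta - \frac{\sigma_e}{v}m$ and $s_T = \frac{\sigma_e m^2}{2v} - m(m_\Xi - \vare\gamma m_\beta)$ are read off by expanding the polynomial form \eqref{eq:gausstransport} of $\Gamma$ and using the identity $\Lambda + \vare\gamma = \sigma_e/v$.

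Next I would solve the ODEs in cascade. The first is a scalar Riccati: the substitution $y_t := 1/p_t$ gives $y'_t = \gamma\sigma^2$, and the terminal condition $y_T = \Lambda^{-1}$ integrates to $p_t = 1/(\Lambda^{-1} - \gamma\sigma^2(T-t))$, which is exactly Kyle's Lambda \eqref{eq:kyleslambda}. For $q_t$, combining the first two ODEs gives $\frac{d}{dt}(q_t/p_t) = 0$, so $q_t = (q_T/\Lambda) p_t$, which rearranges to the announced closed-form expression. Finally, $s_t$ is recovered by backwards integration using the elementary primitives $\int_t^T p_\tau\, d\tau = -(\gamma\sigma^2)^{-1}\log(1 - \Lambda\gamma\sigma^2(T-t))$ and $\int_t^T q_\tau^2\, d\tau = (\Lambda\gamma\sigma^2)^{-1}q_T^2\bigl[(1-\Lambda\gamma\sigma^2(T-t))^{-1} - 1\bigr]$, both obtained by an affine change of variables.

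For the final assertions, I would substitute $P(t,\chi) = \pa_\chi u(t,\chi) = p_t \chi + q_t$ into the defining equation of $\chi_t$ to obtain $d\chi_t = [\gamma\sigma^2 p_t \chi_t + \gamma\sigma^2 q_t] dt + dY_t$, which is affine in $\chi_t$ and therefore an inhomogeneous Ornstein--Uhlenbeck process (recalling that under the inconspicuousness condition, $Y$ is a $\sigma$-Brownian motion in its own filtration under $\P$). The price dynamics then follow by applying It\^o's formula to $P(t,\chi_t) = p_t\chi_t + q_t$: invoking $p'_t + \gamma\sigma^2 p_t^2 = 0$ and $q'_t + \gamma\sigma^2 p_t q_t = 0$ makes all $dt$-terms cancel exactly, leaving $dP_t = p_t\, dY_t$ as announced. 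I do not anticipate any genuine obstacle; the whole argument is a systematic computation, and the only care needed is sign-bookkeeping when expanding $\Gamma$ at $t=T$ and when integrating the ODE for $s_t$.
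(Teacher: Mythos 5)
Your proof is correct and follows essentially the same route as the paper: posit the quadratic ansatz, match coefficients in the PDE to reduce to the Riccati/linear/integration ODE system $p'_t = -\gamma\sigma^2 p_t^2$, $q'_t = -\gamma\sigma^2 p_t q_t$, $s'_t = -\tfrac{\sigma^2}{2}(p_t+\gamma q_t^2)$ with terminal data read from $\Gamma(\cdot;\Nc(m,v^2))$, and integrate in cascade. The only (cosmetic) difference is your shortcut $\frac{d}{dt}(q_t/p_t)=0$ in place of the paper's direct exponential integration of the $q$-ODE; both yield the same closed forms, and your It\^o cancellation for $dP_t$ matches the paper's "direct computation."
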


\begin{proof}
Since $P(t,\chi) = \partial_\chi u(t,\chi)$ from Proposition \ref{p:opt1}, the first expression readily yields the second one by differentiation. In order to compute $u(t,\chi)$, recall that it is the solution to the PDE \eqref{pde:chi1}. Moreover, in the current Gaussian setup, the terminal condition is given from Eq. \eqref{eq:gausstransport}  by 
\begin{equation*}
    %\label{eq:price_terminal_cond}
    u(T, \chi) = \Gamma(\chi; \mathcal{L}^{\Q}(\chi_T) )= \frac{p_T}{2}\chi^2 + q_T \chi + s_T \ , 
\end{equation*}
where
\begin{align*}
%\label{FinalCondLemma5.1}
p_T :=  \frac{\sigma_e}{v} - \vare\gamma=\Lambda \ ,
   \quad
   q_T := m_\Xi -\vare\gamma m_\beta - \frac{\sigma_e}{v}m \ , 
   \quad
   s_T : = \frac{\sigma_e m^2}{2v} - m(m_\Xi -\vare\gamma m_\beta) \ .
\end{align*} 
The PDE \eqref{pde:chi1} for $u$ is a well-behaved parabolic PDE and has obviously a unique solution. Therefore, it suffices to check that the proposed $u$ with $p_t, q_t, s_t$ satisfies the PDE \eqref{pde:chi1}.
%After all, the candidate solution certainly preserves the quadratic structure. This remark is the basis of the standard Riccati trick in Gaussian filtering. 
Given a solution as in the statement, we have:
\begin{align*}
0 = & \ \partial_t u +\frac{\sigma^2}{2}(\partial_{\chi\chi} u +\gamma(\partial_\chi u )^2) \\
  = & \ \frac{p_t'}{2} \chi ^2+q_t'\chi +s_t'
        + \frac{\sigma^2}{2}\left( p_t + \gamma \left( p_t \chi + q_t \right)^2 \right) \\
  = & \ \frac{1}{2}\left( p_t' + \gamma \sigma^2 p_t^2
                   \right)\chi^2
        + \left( q_t' + \gamma\sigma^2 p_t q_t \right)\chi
        + s_t'+\frac{\sigma^2}{2}\left( p_t + \gamma q_t^2 \right) \ .
\end{align*}
Identifying terms, we necessarily have
{
$$  p'_t+{\gamma\sigma^2} p_t^2=0 \ ,
    \quad
q_t' + \gamma\sigma^2 p_t q_t =0 \ , 
	\quad
s_t'+\frac{\sigma^2}{2}\left( p_t + \gamma q_t^2 \right) =0 \ .
 $$
 }

As such, we can solve the ODEs successively. First, we have:
$$ p_t^{-1} = p_T^{-1} - \int_t^T d p_s^{-1} 
            = \Lambda^{-1} {+} \int_t^T \frac{p_s'}{p_s^2} ds
            = \Lambda^{-1} {-} \gamma \sigma^2 (T-t) \ .
$$
{
Then,
\begin{align*}%\label{eq:intp}
	\int_t^T p_s ds = & \int_0^{T-t} \frac{ds}{\Lambda^{-1} {-} \gamma  \sigma^2 s}\\
							= & -\frac{1}{\gamma \sigma^2} \left[ \log \big( \Lambda^{-1}-{\gamma\sigma^2}s \big) \right]_0^{T-t} \\
							= &  -\frac{1}{\gamma \sigma^2} \log \left( 1-\Lambda{\gamma\sigma^2}(T-t)\right) \ .
\end{align*}
Hence,
\begin{align*}
	q_t = q_T \exp\Big(\gamma\sigma^2 \int_t^T p_s ds \Big) 
			= \frac{m_\Xi -\vare\gamma m_\beta - \frac{\sigma_e}{v}m}{1-\Lambda{\gamma\sigma^2}(T-t)} \ .
\end{align*}
So, it gives
\begin{align*}
	\int_t^T q_s^2 ds = & q_T^2 \int_0^{T-t} \frac{ds}{( 1 - \Lambda{\gamma\sigma^2}s )^2} \\
								= & \frac{q_T^2}{ \Lambda{\gamma\sigma^2}} \left(
								\frac{1}{1 -  \Lambda{\gamma\sigma^2}(T-t)} -1
								\right) \ .
\end{align*}
Since $s_T = \frac{\sigma_e m^2}{2v} - m(m_\Xi -\vare\gamma m_\beta)$, we now have
\begin{align*}
	s_t = \frac{1}{2\gamma} \log \left( 1-\Lambda{\gamma\sigma^2}(T-t)\right) 
			- \frac{ q_T^2}{2\Lambda} 
        \left(\frac{1}{1 -\Lambda{\gamma\sigma^2}(T-t)} -1
								\right)
        + \frac{\sigma_e m^2}{2v} - m(m_\Xi -\vare\gamma m_\beta)
        \ .
\end{align*}
}

\end{proof}
We conclude this section by explaining how to obtain the expressions in \eqref{eq:defdet}.
Given the Proposition \ref{p:opt1}, we expect that the quantity that the market maker has to filter is $(Z_t^\Q - \beta, \Xi)$. If we know $m$ and $v$, then we know the $\Q$ distribution of $(Z_T^\Q - \beta, \Xi)$ conditional on $\Fc^\chi_T$ thanks to the Lemma \ref{lemma:gaussian_transportfinalcondition}. Assume that we have an equilibrium where the inconspicuous trading property holds. Then, the $\Q$ martingality of $(Z_t^\Q - \beta, \Xi)$ allows us to characterize the $\Fc^\chi_t$ conditional law of $(Z_t^\Q - \beta, \Xi)$ under $\Q$ as $\Nc(\mu^\Q_t,\Sigma_t)$, where
$$   \mu^\Q_t
 :=   \E^\Q\left[ (Z_t^\Q - \beta, \Xi)^\top \ | \ \Fc_t^\chi \right]
 =\begin{pmatrix}-m_\beta\\m_\Xi\end{pmatrix}
   +
   \frac{\int_0^t k_{s} dY^\Q_s}{\lambda v} 
   \ub
$$ 
and covariance 
$$ \Sigma_t= \Cov\left( (Z_t^\Q - \beta, \Xi)^\top \ | \ \Fc_t^\chi \right)
  =  \begin{bmatrix}
     \sigma^2 t + \sigma_\beta^2 & 0\\
    0           & \sigma^2_\Xi
   \end{bmatrix}
   -
   \frac{\sigma^2 \int_0^t k_{s}^2 ds}
        {\lambda^2 v^2}
   \ub \ub^* \ ,
$$
where $dY^\Q$ is the innovation process under $\Q$ and $k_s$ is $\Fc^\chi$-adapted that we need to determine. 
Thus, the optimal transport and martingale methods give us a parametric form for the conditional covariance of $(Z_t^\Q - \beta, \Xi)$ that only depends on the correct choice of $k_s.$ Starting from this ansatz on $\Sigma_t$, one can use \cite[Theorem 12.7]{liptser1977statistics} to find an expression for $X^*$. This forces us to study the filtering problems under both $\Q$ and $\P$. Then, we can find the correct expression for $k_s$ in \eqref{eq:defdet} by studying the conditional means $\mu^\Q_t, \mu^\P_t$ as in the Proof of Theorem \ref{thm:main} which can be found in Section \ref{s.ProofMain}.

\section{Proofs}
\label{s.appendix}

%--------------------------------------------------------------------------------------------

\subsection{Proof of Lemma \ref{lem:v}}
\label{proof-1}

%\begin{proof}[Proof of Lemma \ref{lem:v}]
The function 
$$x\mapsto F(x):= \frac{x^2}
     {1 - \gamma\sigma^2 T x } 
     - 
     \frac{2\vare }{\sigma^2 T } 
     \log \left( 1-\gamma\sigma^2 T x \right)$$
     is clearly increasing for $x\in [0,\frac{1}{\gamma \sigma^2T}]$ with 
     \begin{align*}
         F(0)=0,\,\lim_{x\uparrow\frac{1}{\gamma \sigma^2T}} F(x)=+\infty \ .
     \end{align*}
    Thus, there exists a unique root $x\in (0,\frac{1}{\gamma\sigma^2 T})$ to
      $F(x)=
       \frac{\sigma_\Xi^2}{\sigma^2 T}+\vare^2\gamma^2 \frac{\sigma_\beta^2}{\sigma^2 T}$. We choose $v$ so that 
       $$\frac{\sigma_{e}}{v}=x+\vare \gamma>0 \ .$$
       Then, the bounds on $\frac{v\vare\gamma}{\sigma_{e}}$ is a direct consequence of  $x\in (0,\frac{1}{\gamma\sigma^2 T})$ .
%\end{proof}

%--------------------------------------------------------------------------------------------
\subsection{Proof of Lemma \ref{lem:sigma}}
\label{proof-2}
%\begin{proof}[Proof of Lemma \ref{lem:sigma}]
We first prove \eqref{eq:defv}. For shorter notation denote
$$ g(v) := \gamma\sigma^2 T \Lambda= \gamma\sigma^2 T\left(\frac{\sigma_{e}}{v}-\vare\gamma \right)\ .$$
It is in fact simpler to compute $\frac{1}{T} \int_0^T (k_{t}^2 - 1) dt$ and add $1$ to the result. This computation is
\begin{align*}
  & \frac{1}{T} \int_0^T (k_{t}^2 - 1) dt= \int_0^1 \left[ 
             \left( 
             \frac{1-\vare \gamma \frac{v}{\sigma_{e}} 
                     g(v) u}
                  {1-g(v) u} \right)^2 - 1 \right] du =  \frac{1}{g(v)} 
    \int_0^{g(v)} \left[ 
             \left( 
             \frac{1-\vare \gamma \frac{v}{\sigma_{e}} 
                     u}
                  {1-u} \right)^2 - 1 \right] du \\
= & \frac{1}{g(v)} 
    \int_{1-g(v)}^1 \left[ 
             \left( 
             \frac{1-\vare \gamma \frac{v}{\sigma_{e}} 
                     (1-s)}
                  {s} \right)^2 - 1 \right] ds =  \frac{1}{g(v)} 
    \int_{1-g(v)}^1 \left[ 
             \left( 
             \frac{1-\vare \gamma \frac{v}{\sigma_{e}}}{s}
             + \vare \gamma \frac{v}{\sigma_{e}}
             \right)^2 - 1 \right] ds \\
= & \frac{1}{g(v)} 
    \int_{1-g(v)}^1 \left[ 
             \left( 
             \frac{1-\vare \gamma \frac{v}{\sigma_{e}}}{s}
             \right)^2
             +
             2
             \frac{1-\vare \gamma \frac{v}{\sigma_{e}}}{s}
             \vare \gamma \frac{v}{\sigma_{e}}
             +
             \left(
             \vare \gamma \frac{v}{\sigma_{e}}
             \right)^2 - 1 \right] ds \\
= &  \left( 
     1-\vare \gamma \frac{v}{\sigma_{e}}
     \right)^2
     \frac{1}{g(v)} 
     \left( -1 + \frac{1}{1-g(v)} \right)
     -
     2
     \left( 1-\vare \gamma \frac{v}{\sigma_{e}} \right)
     \vare \gamma \frac{v}{\sigma_{e}}
     \frac{1}{g(v)} 
     \log \left( 1-g(v) \right)\\
  & \quad 
     -
     \left( 1
     -
     \left(
     \vare \gamma \frac{v}{\sigma_{e}}
     \right)^2 \right) \\
= &  \left( 
     1-\vare \gamma \frac{v}{\sigma_{e}}
     \right)^2
     \frac{1}{1 - g(v)} 
     -
     2
     \left( 1-\vare \gamma \frac{v}{\sigma_{e}} \right)
     \vare \gamma \frac{v}{\sigma_{e}}
     \frac{1}{g(v)} 
     \log \left( 1-g(v) \right)
     -
     \left( 1
     -
     \left(
     \vare \gamma \frac{v}{\sigma_{e}}
     \right)^2 \right) \ .
\end{align*}
Thus, we have 
$$ \frac{1}{T} \int_0^T k_{t}^2  dt =
   \left( 
     1-\vare \gamma \frac{v}{\sigma_{e}}
     \right)^2
     \frac{1}{1 - g(v)} 
     -
     2
     \left( 1-\vare \gamma \frac{v}{\sigma_{e}} \right)
     \vare \gamma \frac{v}{\sigma_{e}}
     \frac{1}{g(v)} 
     \log \left( 1-g(v) \right)
     +
     \left(
     \vare \gamma \frac{v}{\sigma_{e}}
     \right)^2 \ .
$$
Injecting the definition of $v$ in Lemma \ref{lem:v}, we obtain that 
$$  \frac{1}{T} \int_0^T k_{t}^2  dt=\frac{v^2}{\sigma^2 T}  \ .
$$

We now prove the properties of $\Sigma_t.$ Given the definition of $f_t$, $\Sigma_t^{-1}$ can be computed using the Sherman-Morrison formula as 
\begin{align}\label{eq:sigma-1}
  & \Sigma_t^{-1}  =  \begin{bmatrix}
     \sigma^2 t + \sigma_\beta^2 & 0 \\
    0           & \sigma^2_\Xi
   \end{bmatrix}^{-1}
   +
   \frac{\sigma^2 \int_0^t k_{s}^2 ds}
        {\sigma_{e}^2 v^2}
   \frac{
   \begin{bmatrix}
     \sigma^2 t + \sigma_\beta^2 & 0 \\
    0           & \sigma^2_\Xi
   \end{bmatrix}^{-1}
   \ub
   \ub^\top
   \begin{bmatrix}
     \sigma^2 t + \sigma_\beta^2 & 0 \\
    0           & \sigma^2_\Xi
   \end{bmatrix}^{-1}
   }
   {
    1 - f_{t}
    } \ ,
\end{align}
if  the condition $1-f_t \neq 0$ holds.
We can compute $f_t$ as
\begin{align*}
f_{t} := 
   & 
   \frac{\sigma^2 \int_0^t k_{s}^2 ds}
        {\sigma_{e}^2 v^2}
   \ub^\top
   \begin{bmatrix}
     \sigma^2 t +  \sigma_\beta^2 & 0 \\
    0           & \sigma^2_\Xi
   \end{bmatrix}^{-1}
   \ub
   %=
   %\frac{\sigma^2 \int_0^t k_{s}^2 ds}
   %     {\sigma_{e}^2 v^2}
   %\ub^\top
   %\begin{bmatrix}
   %  \frac{1}{\sigma^2 t + \sigma_\beta^2} & 0 \\
   % 0           & \frac{1}{\sigma^2_\Xi}
   %\end{bmatrix}
   %\ub 
   \\
= & 
   \frac{\sigma^2 \int_0^t k_{s}^2 ds}
        {\sigma_{e}^2 v^2}
      \left( 
   \frac{\gamma^2 \varepsilon^2 \left( \sigma^2 T + \sigma_\beta^2 \right)^2}{\sigma^2 t + \sigma_\beta^2} 
   +
    \frac{\sigma_\Xi^4}{\sigma^2_\Xi}
   \right)
   %=
   %\frac{\sigma^2 \int_0^t k_{s}^2 ds}
   %     {\sigma_{e}^2 v^2}
   %   \left( 
   %\frac{\sigma^2 T + \sigma_\beta^2}{\sigma^2 t + \sigma_\beta^2} 
   %\left( \sigma_{e}^2 - \sigma_\Xi^2 \right) 
   %+
   %\sigma_\Xi^2
   %\right) 
   \\
= & 
   \frac{\sigma^2}
        {v^2}
    \int_0^t k_{s}^2 ds
      \left( 1+
   \frac{\sigma^2 (T-t )}{\sigma^2 t + \sigma_\beta^2} 
   \left( 1 - \frac{\sigma_\Xi^2}{\sigma_{e}^2} \right) 
   \right) \ .
\end{align*}
And, by integrating $k_s^2$, one can easily check that $f_t \neq 1$. So, $\Sigma_t$ is invertible with inverse given by \eqref{eq:sigma-1}.

Now, we check the limits. First, we have
\begin{align*}
1-f_{t}
= & 1- 
   \frac{\sigma^2}
        {v^2}
    \int_0^t k_{s}^2 ds
      \left( 
   1+
   \frac{\sigma^2 (T-t)}{\sigma^2 t + \sigma_\beta^2} 
   \left( 1 - \frac{\sigma_\Xi^2}{\sigma_{e}^2} \right) 
   \right) \\
= & 1+\left( 
    \frac{\sigma^2}
          {v^2}
    \int_t^T k_{s}^2 ds
    -1\right)
    \left( 
   1
   +
   \frac{\sigma^2 (T-t)}{\sigma^2 t + \sigma_\beta^2} 
   \left( 1 - \frac{\sigma_\Xi^2}{\sigma_{e}^2} \right) 
   \right) \\
= &  \frac{\sigma^2}{v^2} 
    \int_t^T  k_{s}^2 ds\left( 
   1
   +
   \frac{\sigma^2 (T-t)}{\sigma^2 t + \sigma_\beta^2} 
   \left( 1 - \frac{\sigma_\Xi^2}{\sigma_{e}^2} \right) 
   \right)-
   \frac{\sigma^2(T-t)}{\sigma^2 t + \sigma_\beta^2} 
   \left( 1 - \frac{\sigma_\Xi^2}{\sigma_{e}^2} \right) \ . 
\end{align*}
Integrating $k_s^2$, we obtain 
\begin{align*}
&\frac{\sigma_{e}^2(1-f_{t})}{(T-t)\sigma^2} 
= 
   -\vare^2\gamma^2\frac{\sigma^2 T + \sigma_\beta^2}{\sigma^2 t + \sigma_\beta^2}+\left(
   1
   +
   \frac{\sigma^2 (T-t)}{\sigma^2 t + \sigma_\beta^2} 
   \left( 1 - \frac{\sigma_\Xi^2}{\sigma_{e}^2} \right) 
   \right)\\
    & \quad \times \left(\vare^2\gamma^2-\frac{2\vare}{\sigma^2(T-t)}\log\left(1-\gamma\sigma^2\Lambda(T-t)\right)+\frac{\Lambda^2}{1-\gamma\sigma^2\Lambda(T-t)}\right)\\
   % &=-\frac{\vare^2\gamma^2\sigma_\Xi^2}{\sigma_{e}^2}
   %\frac{\sigma^2 (T-t)}{\sigma^2 t + \sigma_\beta^2} 
   %+\left(
  % 1
  % +
  % \frac{\sigma^2 (T-t)}{\sigma^2 t + \sigma_\beta^2} 
  % \left( 1 - \frac{\sigma_\Xi^2}{\sigma_{e}^2} \right) 
   %\right)\\
  %  &\times \left(-\frac{2\vare}{\sigma^2(T-t)}\log\left(1-\gamma\sigma^2\Lambda(T-t)\right)+\frac{\Lambda^2}{1-\gamma\sigma^2\Lambda(T-t)}\right)\\
   &=\frac{\sigma_\Xi^2}{\sigma^2 T + \sigma_\beta^2} 
   +\left(
   1
   +
   \frac{\sigma^2 (T-t)}{\sigma^2 t + \sigma_\beta^2} 
   \left( 1 - \frac{\sigma_\Xi^2}{\sigma_{e}^2} \right) 
   \right)\\
    &\times \left(-\frac{\sigma_\Xi^2}{\sigma^2 T + \sigma_\beta^2} -\frac{2\vare}{\sigma^2(T-t)}\log\left(1-\gamma\sigma^2\Lambda(T-t)\right)+\frac{\Lambda^2}{1-\gamma\sigma^2\Lambda(T-t)}\right) \ . 
\end{align*}

Additionally, thanks to \eqref{eq:defv}, we obtain that 
\begin{align}\label{eq:den_equivalent}
    \frac{1-f_t}{T-t}\to \frac{\sigma^2}{\sigma_{e}^2}\left(2\vare\gamma \Lambda+\Lambda^2\right)= \frac{\sigma^2}{\sigma_{e}^2}\left(\frac{\sigma_{e}^2}{v^2}-\vare^2\gamma^2\right)>0
\end{align}
% Reda stuff. $\vare^2\gamma^2$ missing. 
% In the regime $\gamma \approx 0$, we have $k_{s,T} = 1 + \Oc(\gamma)$, $v^2 \approx \sigma^2 T$. Thus, sweeping the matters of uniformity under the rug, one would expect:
% \begin{align}
% 1-f_{t,T}
% \sim_{t \rightarrow T} & \ \sigma^2 (T-t)
%    \left( \frac{1}{v^2}
%     - 
%     \frac{1}{\lambda^2}
%    \right) 
% =
% \sigma^2 (T-t)
% \frac{\lambda^2 - v^2}{\lambda^2 v^2}
% > 0 \  .
% \end{align}
% {\color{orange}
% This is actually true because $k_{T,T} = 1$. See below.}
% We have that $\Sigma_{t,T}$ is  invertible and positive definite for $t \in [0,T]$ if and only if:
% $$ 1 -
%    \frac{\sigma^2 \int_0^t k_{s,T}^2 ds}
%         {\lambda^2 v^2}
%    \ub
%    \begin{bmatrix}
%      \sigma^2 t + \sigma_\beta^2 & 0 \\
%     0           & \sigma^2_\Xi
%    \end{bmatrix}^{-1}
%    \ub^*
%    > 0 $$
% if and only if
% $$  \frac{\sigma^2}
%         {v^2}
%     \int_0^t k_{s,T}^2 ds
%       \left( 
%    \frac{\sigma^2 T + \sigma_\beta^2}{\sigma^2 t + \sigma_\beta^2} 
%    \left( 1 - \frac{\sigma_\Xi^2}{\lambda^2} \right) 
%    +
%    \frac{\sigma_\Xi^2}{\lambda^2}
%    \right)
%    < 1 \ .$$
and we have the equivalence at the singularity $t\to T$
\begin{align}
  & \wb^\top \Sigma_t^{-1} \wb\nonumber 
\sim_{t \rightarrow T} 
   \frac{
   \wb^\top\begin{bmatrix}
     \sigma^2 T + \sigma_\beta^2 & 0 \\
    0           & \sigma^2_\Xi
   \end{bmatrix}^{-1}
   \ub
   \ub^\top
   \begin{bmatrix}
     \sigma^2 T + \sigma_\beta^2 & 0 \\
    0           & \sigma^2_\Xi
   \end{bmatrix}^{-1}\wb
   }
   { (T-t)\sigma^2\left(\frac{\sigma_{e}^2}{v^2}-\vare^2\gamma^2\right) }
=    
   \frac{
   |
   \wb|^4
   }
   {  (T-t)\sigma^2\left(\frac{\sigma_{e}^2}{v^2}-\vare^2\gamma^2\right)}
\end{align}
where we used 
$$
\begin{bmatrix}
 \sigma^2 T + \sigma_\beta^2 & 0 \\
0           & \sigma^2_\Xi
\end{bmatrix}^{-1}
\ub
= 
\begin{bmatrix}
 \sigma^2 T + \sigma_\beta^2 & 0 \\
0           & \sigma^2_Xi
\end{bmatrix}^{-1}
\begin{bmatrix}
\vare\gamma(\sigma^2T+\sigma_\beta^2) \\ \sigma^2_\Xi
\end{bmatrix}
=
\begin{bmatrix}
\vare\gamma \\ 1
\end{bmatrix} 
=
\wb
\ . 
$$
Notice  $\wb$ is orthogonal to $\vb$.
To compute the other expansion in \eqref{eq:sigma-equivalent}, we use \eqref{eq:sigma-1}-\eqref{eq:den_equivalent} with the expansions  
\begin{align*}
   &\vb^\top\begin{bmatrix}
     \sigma^2 t + \sigma_\beta^2 & 0 \\
    0           & \sigma^2_\Xi
   \end{bmatrix}^{-1}
   \ub
   \ub^\top
   \begin{bmatrix}
     \sigma^2 t + \sigma_\beta^2 & 0 \\
    0           & \sigma^2_\Xi
   \end{bmatrix}^{-1}\vb= \left( \ub^\top
   \begin{bmatrix}
     \sigma^2 t + \sigma_\beta^2 & 0 \\
    0           & \sigma^2_\Xi
   \end{bmatrix}^{-1}\vb\right)^2\\
   &=\gamma^2 \vare^2\left( 
     \frac{\sigma^2 T + \sigma_\beta^2}{\sigma^2 t + \sigma_\beta^2} -
    1
   \right)^2=\gamma^2 \vare^2\sigma^4 
     \frac{(T-t)^2}{(\sigma^2 t + \sigma_\beta^2)^2} 
\end{align*}
and 
\begin{align*}
   &\wb^\top\begin{bmatrix}
     \sigma^2 t + \sigma_\beta^2 & 0 \\
    0           & \sigma^2_\Xi
   \end{bmatrix}^{-1}
   \ub
   \ub^\top
   \begin{bmatrix}
     \sigma^2 t + \sigma_\beta^2 & 0 \\
    0           & \sigma^2_\Xi
   \end{bmatrix}^{-1}\vb\sim_{t\uparrow T}\frac{2 (\vare\gamma)^3\sigma^2(T-t)}{\sigma^2T+\sigma^2_\beta} \ .
\end{align*}

Given the orthogonality of $\vb,\wb$ and \eqref{eq:defv}, the expression of $\Sigma_T$ is a direct consequence of the computation of 
\begin{align*}
    \wb^\top \Sigma_T\wb&=\wb^\top \begin{bmatrix}
     \sigma^2 T + \sigma_\beta^2 & 0\\
    0           & \sigma^2_\Xi
   \end{bmatrix}
   \wb-\frac{1}
        {\sigma_{e}^2 }\wb^\top
   \ub \ub^\top\wb=0\\
      \vb^\top \Sigma_T\vb&=\vb^\top \begin{bmatrix}
     \sigma^2 T + \sigma_\beta^2 & 0\\
    0           & \sigma^2_\Xi
   \end{bmatrix}
   \vb-\frac{1}
        {\sigma_{e}^2 }\vb^\top
   \ub \ub^\top\vb\\
   &=
     \sigma^2 T + \sigma_\beta^2 + \vare^2\gamma^2\sigma^2_\Xi
   -\frac{ \vare^2\gamma^2}
        {\sigma_{e}^2 }|\sigma^2T+\sigma_\beta^2-\sigma_\Xi^2|^2\\
   & 
   =\frac{ ( \vare^2\gamma^2(\sigma^2 T + \sigma_\beta^2) +\sigma^2_\Xi)(\sigma^2 T + \sigma_\beta^2 + \vare^2\gamma^2\sigma^2_\Xi)- \vare^2\gamma^2|\sigma^2T+\sigma_\beta^2-\sigma_\Xi^2|^2}
        {\sigma_{e}^2 }\\
        &=\frac{ (1+ \vare^2\gamma^2)^2(\sigma^2T+\sigma_\beta^2)\sigma_\Xi^2}
        {\sigma_{e}^2 }= \frac{(\sigma^2 T +\sigma_\beta^2)\sigma^2_\Xi}{\sigma_{e}^2 }
     \vb^\top\vb\vb^\top \vb
\end{align*}
which proves all the properties of $\Sigma$.

The invertibility of $\Sigma_t$ for $t<T$ easily shows the existence of $ \left( \Phi_{s,t} \ ; \ s \leq t < T \right)$. However, the flow $\Phi_{s,t}$ has a singularity at the final time $T$ that needs to be computed. Let $x_t = \Phi_{s, t}(x)$ for any $s< t$, which by definition satisfies
$ x_{s} = x$ and 
$$ dx_t = - \sigma^2 \rb_t \left(\rb_t - e_1 \right)^\top \Sigma_t^{-1} x_t dt \ .$$
Since $\frac{dx_t}{dt}$ is at direction $\ub$, we have that $x_t \in x + \ub \R$ for all $t\in [s,T)$.
Recalling the expression of $\Sigma_t^{-1}$, we have rational singularity from Eq. \eqref{eq:sigma-equivalent}:
$$ \frac{dx_t}{dt}
   \sim_{t \rightarrow T}
   - \frac{1}{\frac{\sigma_{e}^2}{v^2}-\vare^2\gamma^2} 
   \rb_t
   \frac{\left( \rb_t - e_1 \right)^\top \wb \wb^\top x_t}
   {T-t} 
   \sim_{t \rightarrow T}
   - \frac{1}{\frac{\sigma_{e}^2}{v^2}-\vare^2\gamma^2} 
   \frac{\ub}{\sigma_{e} v} 
   \frac{\left( \frac{\ub}{\sigma_{e} v}  - e_1 \right)^\top \wb \wb^\top x_t}
   {T-t} 
   \ .
$$
As such, 
\begin{align*}\frac{d(\wb^\top x_t)}{dt}&\sim_{t\rightarrow T} - \frac{1}{\frac{\sigma_{e}^2}{v^2}-\vare^2\gamma^2} 
   \frac{\wb^\top\ub}{\sigma_{e} v} 
   \frac{\left( \frac{\gamma^2\vare^2}{\sigma_{e} v}(\sigma^2 T+\sigma_\beta^2)-\gamma \vare +\frac{\sigma_\Xi^2}{\sigma_{e} v}\right)(\wb^\top x_t)}
   {T-t} \\
   &=- \frac{1}{\frac{\sigma_{e}^2}{v^2}-\vare^2\gamma^2} 
   \frac{\wb^\top\ub}{\sigma_{e} v} 
   \frac{\Lambda(\wb^\top x_t)}
   {T-t}
\end{align*}
where all the terms in front of $-\frac{\wb^\top x_t}{T-t}$ is positive. Thus, the flow of the ODE forces $\wb^\top x_t$ to revert to $0$ at time $T$ and $x_T$ is in the kernel of the linear form $\wb^\top$. Moreover, recall that $x_t \in x + \ub \R$. Therefore, $\Phi_{t,T}(x) = x + \ub y$, for some $y \in \R$ and
\begin{align*}
    \wb^\top\Phi_{t,T}(x)=\wb^\top \left( x + \ub y \right) = 0 \ ,
\end{align*} 
which is equivalent to
$$ y 
   = 
   - \frac{\wb^\top x}
          {\wb^\top \ub} \ .
   $$
This yields that $\Phi_{t,T}$ is independent of $t$ for $t<T$ and
$$ \Phi_{t,T}(x)
   = 
   x
   - \frac{\wb^\top x   }
          {\wb^\top \ub }
    \ub
          \ 
   $$
which concludes the proof of the lemma. 

%\end{proof}

\subsection{Proof of Theorem \ref{thm:main}}
\label{s.ProofMain}

\medskip

{\textbf{Step 1: Existence of solution to the linear system.}}
The existence of solution to the system \eqref{eq:chi}-\eqref{eq:expmup} is obvious for $t<T$ due to the linearity of the system and the invertibility of $\Sigma_t$ for $t<T$. The existence on $[0,T]$ will be shown below by showing the existence of a limit as $t\to T$. 
Injecting in \eqref{eq:expmup}, the expression of $\Phi$ and $\rb$, we obtain 
\begin{align*}
        \mu_t^\P &=  \ab_t +\rb_t \chi^*_t+\gamma\sigma^2\int_0^t (p_s\chi^*_s+q_s)  \left(e_1-\rb_s  -\frac{\wb^\top (e_1-\frac{k_s\ub}{\sigma_e v})}{\sigma_e^2 } \ub\right) ds\\
         &=  \ab_t +\frac{k_t\chi^*_t\ub}{\sigma_e v} +\gamma\sigma^2\int_0^t (p_s\chi^*_s+q_s)  \left(e_1  - \frac{\gamma \vare}{\sigma_e^2} \ub\right) ds \ . 
\end{align*}
Differentiating this equality and using the equalities 
    $ k_t' = \gamma \sigma^2 p_t ( \frac{v \varepsilon \gamma}{\sigma_e} -k_t )  $, $\ab'_t=-\gamma\sigma^2 q_t \left(\frac{k_t}{\sigma_e v} - \frac{\gamma \vare}{\sigma_e^2}\right)\ub$,
we obtain 
\begin{align*}
        d\mu_t^\P 
         &=  \ab'_tdt +\frac{k_t'\chi^*_t\ub}{\sigma_e v}dt+\left(\frac{k_t}{\sigma_e v} - \frac{\gamma \vare}{\sigma_e^2}\right)\gamma\sigma^2 (p_t\chi^*_t+q_t)\ub dt\\
         & \quad + \gamma\sigma^2(p_t\chi^*_t+q_t)  e_1  dt+\rb_t(dX^*_t +dZ_t)\\
        &= \gamma\sigma^2(p_t\chi^*_t+q_t)  e_1  dt+\rb_t(dX^*_t +dZ_t) \ .
\end{align*}
We also differentiate \eqref{eq:chi} to obtain
\begin{align}
\notag
    d\mu_t^\P&=\gamma\sigma^2 (p_t \chi^*_t +q_t) e_1 dt+\rb_t(dX^*_t +dZ_t)\label{eq:dynp}\\
    d\begin{pmatrix}
    Z^\Q_t-\beta\\ \Xi 
    \end{pmatrix}&=\gamma\sigma^2 (p_t \chi^*_t +q_t) e_1 dt+e_1 dZ_t 
\end{align}
so that 
\begin{align*}
 d\left(\begin{pmatrix}     Z^\Q_t-\beta\\ \Xi      \end{pmatrix}-\mu_t^\P\right)&=-\rb_t dX^*_t+\left(e_1-\rb_t \right) dZ_t\\
 &=-\rb_t \sigma^2 \left(\rb_t -e_1\right)^\top \Sigma_t^{-1}\left(\begin{pmatrix}
    Z^\Q_t-\beta\\ \Xi 
    \end{pmatrix}-\mu^\P_t\right)dt+\left(e_1-\rb_t \right) dZ_t \ .
\end{align*}
Recalling that $\vb,\wb$ are orthogonal, multiplying the equation above with $\vb$ and $\wb$, and using the expression of $\rb$ in \eqref{eq:defdet}, the processes defined by 
$$A^1_t:=\vb^\top\left(\begin{pmatrix}
    Z^\Q_t-\beta\\ \Xi 
    \end{pmatrix}-\mu_t^\P\right),
    \quad
    A^2_t:=\wb^\top\left(\begin{pmatrix}
    Z^\Q_t-\beta\\ \Xi 
    \end{pmatrix}-\mu_t^\P\right)$$
satisfy the linear equation in $(A^1,A^2)$, 
\begin{align*}
 dA^1_t=&-\frac{k_t\vb^\top \ub}{\sigma_{e} v}\sigma^2 \left(\frac{k_t\ub}{\sigma_{e} v}-e_1\right)^\top \Sigma_t^{-1}\frac{\vb}{|\vb|^2}A^1_tdt\\
 &-\frac{k_t\vb^\top \ub}{\sigma_{e} v}\sigma^2 \left(\frac{k_t\sigma_{e}}{ v}-\gamma\vare\right) \frac{\wb^\top \Sigma_t^{-1}\wb}{|\wb|^4}A^2_tdt\\
 &-\frac{k_t\vb^\top \ub}{\sigma_{e} v}\sigma^2 \left(\frac{k_t\ub^\top\vb}{\sigma_{e} v}-e_1^\top\vb\right)\frac{\vb^\top \Sigma_t^{-1}\wb}{|\wb|^4}A^2_tdt+\vb^\top\left(e_1-\rb_t\right) dZ_t \ , \\
  dA^2_t=&-\frac{k_t\wb^\top \ub}{\sigma_{e} v}\sigma^2 \left(\frac{k_t\ub}{\sigma_{e} v}-e_1\right)^\top \Sigma_t^{-1}\frac{\vb}{|\vb|^2}A^1_tdt\\
 &-\frac{k_t\wb^\top \ub}{\sigma_{e} v}\sigma^2 \left(\frac{k_t\sigma_{e}}{ v}-\gamma\vare\right) \frac{\wb^\top \Sigma_t^{-1}\wb}{|\wb|^4}A^2_tdt\\
 &-\frac{k_t\wb^\top \ub}{\sigma_{e} v}\sigma^2 \left(\frac{k_t\ub^\top\vb}{\sigma_{e} v}-e_1^\top\vb\right)\frac{\vb^\top \Sigma_t^{-1}\wb}{|\wb|^4}A^2_tdt+\wb^\top\left(e_1-\rb_t\right) dZ_t \ .
\end{align*}
We have that
\begin{align*}
    \wb^\top \ub>0, \quad \frac{k_t\sigma_{e}}{ v}-\vare\gamma \to_{t\to T}\frac{\sigma_{e}}{ v}-\vare\gamma=\Lambda  >0
\end{align*}
and \eqref{eq:sigma-equivalent} shows that 
$$ \frac{\wb^\top \Sigma_t^{-1}\wb}{|\wb|^4}\sim_{t\to T}\frac{1}{\sigma^2 (T-t) \left(\frac{\sigma_{e}^2}{v^2}-\gamma^2\vare^2\right)}\mbox{ and }\vb^\top \Sigma_t^{-1}\wb,\, \vb^\top \Sigma_t^{-1}\vb\mbox{ are bounded.}$$

Thus, the leading term of $dA^2_t$ is $-\frac{k_t\wb^\top \ub}{\sigma_{e} v}\sigma^2 \left(\frac{k_t\sigma_{e}}{ v}-\gamma\vare\right) \frac{\wb^\top \Sigma_t^{-1}\wb}{|\wb|^4}A^2_tdt$ and it is equivalent to $\frac{-c A^2_t}{T-t}$ for some $c>0$ as $t\to T$. Due to this reversion to $0$, $A^2_t\to 0$ as $t\to T$ and 
$\int_0^t k_s\left(\frac{k_s\sigma_{e}}{ v}-\gamma\vare\right) \frac{\wb^\top \Sigma_s^{-1}\wb}{|\wb|^4}A^2_sds$
admits a limit as $t\to T$. Thus, $A^1$ is well defined on $[0,T]$ and admits a limit as $t\to T$. 
Given the orthogonality of $\vb$ and $\wb$, we easily see that $\left(\begin{pmatrix}
    Z_t^\Q-\beta\\ \Xi 
    \end{pmatrix}-\mu_t^\P\right)$ admits a limit as $t\to T$ with 
\begin{align}\label{eq:A2}
    A^2_T:=\lim_{t\to T}A^2_t=\wb^\top\left(\begin{pmatrix}
    Z^\Q_T-\beta\\ \Xi 
    \end{pmatrix}-\mu_T^\P\right)=0 \ .
\end{align}
Thus, the system \eqref{eq:chi}-\eqref{eq:expmup} admits a solution on $[0,T]$.

\medskip

{\textbf{Step 2: Solution to the filtering problem and \eqref{eq:incons}. }}
We first compute $\ab_0$. 
Thanks to the equality  $ k_t' = \gamma \sigma^2 p_t ( \frac{v \varepsilon \gamma}{\sigma_e} -k_t )  $ we have 
\begin{align*}
   &\gamma\sigma^2 q_t \left(k_t- \frac{v\gamma \vare}{\sigma_e}\right)\\
    &=\frac{m_\Xi -\vare\gamma m_\beta - \frac{\sigma_e}{v}m}{\Lambda}\gamma\sigma^2 p_t \left(k_t- \frac{v\gamma \vare}{\sigma_e}\right)=-\frac{m_\Xi -\vare\gamma m_\beta - \frac{\sigma_e}{v}m}{\Lambda}k'_t \ .
\end{align*}
Thus, 
\begin{align*}
    b_t=-m+\int_t^T  \gamma\sigma^2 q_s \left(k_s- \frac{v\gamma \vare}{\sigma_e}\right)ds=-m-\frac{m_\Xi -\vare\gamma m_\beta - \frac{\sigma_e}{v}m}{\Lambda}(k_T-k_t) \ .
\end{align*}
Due to the definition of $m$, this implies that $b_0=0$. Therefore, 
\begin{align}
    \label{eq:ab0}\ab_0=\begin{pmatrix}-m_\beta\\m_\Xi \end{pmatrix} \ .
\end{align}

Let $\Q$ be the probability measure defined in Proposition \ref{p:opt1} with $P_t=P(t,\chi^*_t)=p_t \chi^*_t+q_t$ so that $Z^\Q$ is a Brownian motion under $\Q$. Denote $\mu_t^\Q$ and $\tilde \mu_t^\P$ the $\Fc^Y_t=\Fc^\chi_t$ conditional expectation of $(Z_t^\Q-\beta, \Xi )^\top$ under respectively $\Q$ and $\P$. Our aim is to show that $\mu^\P$ defined in \eqref{eq:expmup} is indeed $\tilde \mu^\P$. 

Similarly to \eqref{eq:eqlaw}, $\Lc^\Q\left( -\beta, \Xi \right) =\Lc^\P\left( -\beta, \Xi  \right)$ so that $\mu_0^\Q=\tilde \mu_0^\P=\begin{pmatrix}-m_\beta\\m_\Xi \end{pmatrix}= \ab_0=\mu_0^\P$. Using
the filtering equations in \cite[Theorem 12.7]{liptser1977statistics} the conditional covariance matrix $\gamma^\P_t$ of $(Z_t^\Q-\beta, \Xi )^\top$ 
satisfies 
$$\frac{d\gamma^\P_t}{dt}=\sigma^2 e_1 e_1^\top-{\sigma^2}\left( e_1+ \gamma^\P_t\Sigma_t^{-1}\left(\rb_t-e_1\right)\right)\left( e_1+ \gamma^\P_t\Sigma_t^{-1}\left(\rb_t-e_1\right)\right)^\top \ . $$
Note that at $t=0$, $\gamma^\P_t\Sigma_t^{-1}$ is the identity matrix. Furthermore, by direct differentiation of the definition of $\Sigma_t$,  
$$\frac{d\Sigma_t}{dt}=\sigma^2 e_1 e_1^\top-{\sigma^2}\left( e_1+ \left(\rb_t-e_1\right)\right)\left( e_1+ \left(\rb_t-e_1\right)\right)^\top $$
and we obtain $\Sigma_t=\gamma^\P_t$.
Applying \cite[Theorem 12.7]{liptser1977statistics} under $\Q$, we easily find that $\Sigma_t$ is also the $\Q$ conditional covariance matrix. 
The same theorem under each probability also leads to 
\begin{align*}
    d\tilde \mu^\P_t&=
    \gamma\sigma^2  P(t,\chi^*_t)
    e_1 dt
    +\rb_t dY_t \ , \\
  d\mu^\Q_t &= \gamma\sigma^2  P(t,\chi^*_t)\rb_t dt-\sigma^2\rb_t \left(\rb_t-e_1\right)^\top \Sigma_t^{-1}(\mu^\Q_t-\mu^\P_t) dt+\rb_t dY_t \ .
\end{align*}
Note that $\tilde \mu^\P_t$ and $ \mu^\P_t$ have the same dynamics and initial conditions. Thus, they are equal. 
The equality $\tilde \mu^\P_t=\mu_t^\P$ and the expression of $X^*$ easily yield the inconspicuousness condition \eqref{eq:incons}.

\medskip

\textbf{Step 3: Optimality of the criterion of the informed trader. }
Taking the difference of $ d \mu^\P_t$ and $ d \mu^\Q_t$ we have
\begin{align}\label{eq:diff}
  d(\mu^\Q_t- \mu^\P_t) =&-\sigma^2\rb_t \left(\rb_t-e_1\right)^\top \Sigma_t^{-1}(\mu^\Q_t-\mu^\P_t) dt-  \gamma\sigma^2 P(t,\chi^*_t)
    \left(e_1-\rb_t \right) dt \ .
\end{align}
Integrating \eqref{eq:diff} and using the equality $\m_0^\P=\mu_0^\Q$ and the definition of $\Phi_{s,t}$ as the flow \eqref{eq:resolvent}, we have 
\begin{align*}
    \mu^\Q_t- \mu^\P_t
    =&- \int_0^t \gamma\sigma^2 P(s,\chi^*_s) \Phi_{s,t}\left(e_1- \rb_s \right) ds \ .
\end{align*}
Thus, 
\begin{align*}
    \mu^\Q_T- \mu^\P_T
    =&- \int_0^T \gamma\sigma^2 P(s,\chi^*_s) \Phi_{s,T}\left(e_1- \rb_T \right) ds- \int_0^T \gamma\sigma^2 P(s,\chi^*_s) \Phi_{s,T}\left(\rb_T- \rb_s \right) ds \ .
\end{align*}
Note that $\rb_T- \rb_s$ is proportional to $\ub$ and $\Phi_{s,T}\ub=\Phi\ub=0.$
By direct computations, we obtain 
$\wb^\top \Phi_{s,T}\left(e_1-\rb_T \right)=\wb^\top \Phi\left(e_1-\rb_T \right)=0$.
Thus, $\wb^\top\mu^\Q_T=\wb^\top\mu^\P_T=\wb^\top\ab_T+\wb^\top\frac{ \ub}{\sigma_{e} v}\chi^*_T=\wb^\top\begin{pmatrix}-m_\beta\\m_\Xi \end{pmatrix}+\frac{\chi^*_T - m }{ \sigma_{e} v}\wb^\top\ub
     \ $, where we used the expression of $\mu_T^\P$ and $\ab_T$. 
We now use \eqref{eq:A2} to obtain 
\begin{align*}
    0&=\wb^\top\left(\mu_T^\P-\begin{pmatrix}Z^\Q_T-\beta\\ \Xi \end{pmatrix}\right)=\wb^\top\left(\mu_T^\Q-\begin{pmatrix}Z^\Q_T-\beta\\ \Xi \end{pmatrix}\right)\\
    &=\frac{\chi^*_T - m }{ \sigma_{e} v}\wb^\top\ub-\wb^\top\begin{pmatrix}Z^\Q_T-\beta+m_\beta\\ \Xi -m_\Xi \end{pmatrix}=\frac{\sigma_{e}}{  v}(\chi^*_T - m )-\wb^\top\begin{pmatrix}Z^\Q_T-\beta+m_\beta\\ \Xi -m_\Xi \end{pmatrix} \ , 
\end{align*}
 and it leads to 
\begin{align*}
    \chi^*_T&= m+ \frac{v}{ \sigma_{e}}\wb^\top\begin{pmatrix}Z^\Q_T-\beta+m_\beta\\ \Xi -m_\Xi \end{pmatrix}=I(Z^\Q_T-\beta, \Xi ;\Nc(m,v^2))
\end{align*}
which is the optimality condition \eqref{eq:target}. Thus, we have the optimality of \eqref{eq:gausscontrol}. 

\medskip

\textbf{Step 4: Martingality of the price and its terminal condition.}
By direct computation 
$$dP(t,\chi^*_t)=\frac{dY_t}{\Lambda^{-1}-{\gamma\sigma^2}(T-t)} \ .$$
Additionally the inconspicuous trading property \eqref{eq:incons} implies that $Y$ is a $\Fc^Y$ martingale. Thus $P(t,\chi^*_t)$ is a $(\P,\Fc^Y)$ martingale with final value 
\begin{align*}
P(T,\chi_T^*)&=p_T \chi^*_T +q_T \\
& =\frac{\sigma_{e}}{v}\chi^*_T -\vare\gamma \chi^*_T -\frac{\sigma_e}{v}m +m_\Xi -\vare \gamma m_\beta \\
&= \wb^\top\begin{pmatrix}Z^\Q_T-\beta+m_\beta\\ \Xi -m_\Xi \end{pmatrix}
-\vare\gamma\chi^*_T
+m_\Xi -\vare \gamma m_\beta \\
&=  \Xi -\gamma \vare (\chi^*_T-Z^\Q_T+\beta)
= \Xi -\gamma \vare (X^*_T+\beta)
\end{align*}
which gives that we indeed have 
$P(t,\chi^*_t)=\E[ \Xi -\gamma \vare (X^*_T+\beta)|\Fc^{X^*+Z}_t].$

%--------------------------------------------------------------------------------------------

\subsection{Proof of Theorem \ref{thm:prop}}
\label{proof-3}

   (1): Eq. \eqref{eq:pricedyn} (and by definition \eqref{eq:kyleslambda}) is a direct consequence of differentiation of $\fP^*_t(  Y )=p_t\chi_t(Y)+q_t.$  The value of the terminal price $\fP^*_T(  Y )$ is obtained in Step 4 of Proof of Theorem \ref{thm:main}.

   (2): The monotonicity and limits of $\Lambda $ as $\vare,\gamma \downarrow 0$ are direct consequences of taking limits in Lemma \ref{lem:v}. 

    (3): For the proof of \eqref{eq:utii}, by Proposition \ref{p:opt1}, when all the random variables are normally distributed, the expected utility of the informed trader conditional to her private information $\{ \Xi = \xi ;\beta=b\}$ is 
\begin{align*}
   & -e^{-\gamma u(0,0)}  
   \E\left[\exp\left(-\gamma \Gamma^c(Z_T-b, \xi )\right)\right]\\
   = &
-e^{-\gamma u(0,0)}  
   \E\Big[\exp\Big(-
   \frac{\gamma}{2} \Big(-2  \xi  (Z_T-b) -{\vare\gamma}(Z_T-b)^2\\
   & \quad \quad \quad
   +\frac{v}{\sigma_{e}}( \xi -m_\Xi  +{\varepsilon \gamma} (Z_T-b+m_\beta))^2 
       +2m ( \xi  +{\varepsilon \gamma} (Z_T-b)) \Big)
   \Big) \Big]  \ . 
\end{align*}
Without loss of generality, let $m_\Xi  = 0$ and $m_\beta = 0$. Then, by completing the square and simplifying, we have
\begin{align*}
   & -e^{-\gamma u(0,0)}  
   \E\left[\exp\left(-\gamma \Gamma^c(Z_T-b, \xi )\right)\right]\\
   = &
-e^{-\gamma u(0,0)}  
   \E\Big[\exp\Big(
\frac{\vare \gamma^2 v \Lambda}{2\sigma_e}\big(
Z_T - b - \frac{\sigma_e}{\vare \gamma v \Lambda} 
(m\vare \gamma -  \xi  \frac{v\Lambda}{\sigma_e})
\big)^2
   \Big) \Big] \\
   & \quad \times \exp\Big(
    -\frac{\gamma}{2}(\frac{ \xi ^2}{\vare \gamma} + \frac{\vare \gamma \sigma_e}{v\Lambda} m^2)
   \Big) \ . 
\end{align*}
Now, since $Z_T \sim N(0, \sigma^2 T)$, we have
\begin{align*}
   &\E\Big[\exp\Big(
\frac{\vare \gamma^2 v \Lambda}{2\sigma_e}\big(
Z_T - b - \frac{\sigma_e}{\vare \gamma v \Lambda} 
(m\vare \gamma -  \xi  \frac{v\Lambda}{\sigma_e})
\big)^2
   \Big) \Big] \\
 = &
 \frac{1}{\sqrt{2\pi \sigma^2 T}}
\int \exp\Big(
\frac{\vare \gamma^2 v \Lambda}{2\sigma_e}\big(
x - b - \frac{\sigma_e}{\vare \gamma v \Lambda} 
(m\vare \gamma -  \xi  \frac{v \Lambda}{\sigma_e})
\big)^2
\Big)
e^{-\frac{x^2}{2 \sigma^2 T}}
dx\\
    = &
\frac{1}{\sqrt{1-\frac{v\Lambda}{\sigma_e}\vare \gamma^2 \sigma^2 T}}    
\exp\left(
\frac{\varepsilon \gamma^2}{2}\frac{ v \Lambda}{\sigma_e} 
\frac{\alpha^2}{1 - \frac{v\Lambda}{\sigma_e}\vare \gamma^2 \sigma^2 T}\right) \ ,
 \end{align*}
 where 
    $$\alpha = 
 b + \frac{\sigma_e}{\varepsilon \gamma v \Lambda} \left( m \varepsilon \gamma -  \xi  \frac{v \Lambda}{\sigma_e} \right) 
 = b + m \frac{\sigma_e}{v\Lambda} - \frac{1}{\vare\gamma} \xi 
 \ . 
    $$
So, we have the expected utility of the informed trader as
\begin{align*}
   -e^{-\gamma u(0,0)} 
   \frac{1}{\sqrt{1-\frac{v\Lambda}{\sigma_e}\vare \gamma^2 \sigma^2 T}}   
   \exp\Big(
    -\frac{\gamma}{2}(\frac{ \xi ^2}{\vare \gamma} + \frac{\vare \gamma \sigma_e}{v\Lambda} m^2)
    + \frac{\varepsilon \gamma^2}{2}\frac{ v \Lambda}{\sigma_e} 
\frac{\alpha^2}{1 - \frac{v\Lambda}{\sigma_e}\vare \gamma^2 \sigma^2 T}
   \Big) \ . 
\end{align*}

    (4): For the proof of \eqref{eq:wealthMM}, we note that $-Y_t$ is the position of the market maker. Thus, the expected profit of the market maker is 
\begin{align}\label{eq:valuemm}
    &\E\left[-Y_T(V_f-P_T)-\int_0^T Y_s dP_s\right]=-\vare\gamma \E\left[Y_T(X_T+\beta)\right]=-\vare\gamma \E\left[Y_T(\chi^*_T-e_1^\top\mu^\P_T)\right]
\end{align}
where we have used the martingality of $P$, the expression for $V_f-P_T$ and the conditioning of $(X_T+\beta)$ in $\Fc^Y_T.$ 
Note that under $\P$, $Y$ is a Brownian motion and 
thanks to \eqref{eq:chi}, 
$$d\chi^*_t= \gamma\sigma^2 (p_t \chi^*_t+q_t)dt+d Y_t \ .$$
Thus, 
\begin{align*}
    d\E\left[Y_t\chi^*_t\right]=\gamma\sigma^2 p_t \E\left[Y_t\chi^*_t\right]dt+\sigma^2 dt \ ,
\end{align*}
which implies that 
$$\E\left[Y_T\chi^*_T\right]=\sigma^2\int_0^Te^{\gamma\sigma^2\int_s^T p_rdr}ds=\sigma^2\int_0^T\frac{1}{\left( 1-\Lambda{\gamma\sigma^2}(T-s)\right)}ds=-\frac{\ln \left( 1-\Lambda{\gamma\sigma^2}T\right)}{\Lambda\gamma} \ . $$
Expanding \eqref{eq:valuemm} using this expression, we obtain that this expected wealth is
\begin{align*}    &-\vare\gamma \E\left[Y_T\left(\chi^*_T(1-\vare\gamma \frac{\sigma^2T+\sigma^2_\beta}{\sigma_{e} v})+m_\beta + \frac{m \gamma \vare (\sigma^2 T + \sigma_\beta^2)}{\sigma_e v}
\right)\right]   \\ 
& =-\vare\gamma \left(1-\vare\gamma \frac{\sigma^2T+\sigma^2_\beta}{\sigma_{e} v}\right)\E\left[Y_T\chi^*_T\right]\\
    &=\frac{\vare}{\Lambda} \left(1-\vare\gamma \frac{\sigma^2T+\sigma^2_\beta}{\sigma_{e} v}\right)\ln(1-\gamma\sigma^2 T \Lambda) <0 \ .
\end{align*}

%\appendix

% % --------------------------------------------------------------------
\bibliographystyle{halpha}
\bibliography{ref}

\end{document}